\def\setR{\mathbb{R}} 
\DeclareMathOperator{\sgn}{sgn}
\newcommand{\sss}[1]{\scriptscriptstyle #1}
    \theoremstyle{plain}
        \newtheorem{theorem}{Theorem}
        \newtheorem{proposition}{Proposition}
    \theoremstyle{definition}
        \newtheorem{definition}{Definition}
        \newtheorem*{notation}{Notation}
    \theoremstyle{remark}
        \newtheorem*{remark}{Remark}
\begin{document}

\title{Restriction of Laplace operator
on one-forms: from $\setR^{n+2}$ and $\setR^{n+1}$ ambient spaces to embedded (A)dS$_n$  submanifolds}

\author{E.~Huguet$^1$}
\author{J.~Queva$^2$}
\author{J.~Renaud$^3$}
\affiliation{$1$ - Universit\'e de Paris, APC-Astroparticule et Cosmologie (UMR-CNRS 7164), 
Batiment Condorcet, 10 rue Alice Domon et L\'eonie Duquet, F-75205 Paris Cedex 13, France.\\
$2$ - Universit\'e de Corse -- CNRS UMR 6134 SPE, Campus Grimaldi BP 52, 20250 Corte, France.\\
$3$ - Universit\'e Gustave Eiffel, APC-Astroparticule et Cosmologie (UMR-CNRS 7164), 
Batiment Condorcet, 10 rue Alice Domon et L\'eonie Duquet, F-75205 Paris Cedex 13, France.
} 
\email{huguet@apc.univ-paris7.fr, \\queva@univ-corse.fr,\\jacques.renaud@apc.univ-paris7.fr}

\date{\today}

\pacs{04.62.+v, 02.40 -k}

\begin{abstract}
The Laplace-de~Rham  operator acting on a one-form $a$: 
$\square a$, in $\setR^{n+2}$ or  $\setR^{n+1}$ spaces is restricted to $n$-dimensional 
pseudo-spheres. This includes, in particular, the $n$-dimensional de~Sitter and Anti-de~Sitter space-times. 
The restriction is designed to extract the corresponding $n$-dimensional Laplace-de~Rham operator acting on the corresponding $n$-dimensional one-form on pseudo-spheres. Explicit formulas relating these two operators are given 
in each situation. The converse problem, of extending an $n$-dimensional operator composed of the sum of the Laplace-de~Rham operator and additional terms to ambient spaces Laplace-de~Rham operator, is also studied. 
We show that for any additional term this operator on the embedded space is the restriction of Laplace-de~Rham operator on the embedding space.
These results are translated to the Laplace-Beltrami operator thanks to the Weitzenböck formula, for which a proof is also given.

\end{abstract}

\maketitle

\section{Introduction}\label{Sec-Introduction}

This paper deals with the restriction from different embedding spaces ($\setR^{n+2}$ and $\setR^{n+1}$) to the $n$-dimensional 
de~Sitter (dS) and Anti-de~Sitter (AdS) 
space-times  of the Laplace-de~Rham operator, hereafter generically denoted 
by $\square := -(d\delta + \delta d)$, acting either on one-form or on scalar fields.

Embedding curved space-times in flat higher dimensional space is an old and still pursued topic (see \cite{Pavsic:2000qy} for a wealth of references and \cite{Paston:2013uia,Akbar:2017vja,Dunajski:2018xoa,Hong:2020dow} for current endeavors), often with an interest in the description of geodesics and determination of Hawking temperature.
Here, on the contrary, the present work is focused on the restriction of differential operators from the embedding space to the space-time (the embedded submanifold).
Such an approach has been used in the study of quantum field theories from an ambient space point of view \cite{Dirac:1935zz, Dirac:1936fq, Fronsdal:1965zzb, Fronsdal:1974ew, Fronsdal:1975eq, PhysRevD.12.3819, Fronsdal:1978vb, Fang:1979hq, Gazeau:1987nu,Gazeau:1999xn,Garidi:2003bg,Huguet:2006fe,queva:tel-00503186,Faci:2009un, Huguet:2013tv,Takook:2014paa,Zapata:2017gqg,Pethybridge:2021rwf}. 
It has also been already used in the case of quantum mechanics constrained to a surface in which the restricted (scalar) laplacian shows additional terms due to the embedding \cite{JENSEN1971586, Costa_1986,10.1143/ptp/88.2.229, PhysRevA.47.686}.

The lowering of dimensionality,  between embedding and embedded spaces, leads to constraints or additional degrees of freedom for fields 
on which operators act, depending on the situation. For instance, the restriction of $\square A$, $A$ being a vector field, to a lower dimensional embedded manifold, implies that the number of components of $A$ has to be reduced. 
For the restricted operator, these ``additional" components become ``additional" degrees of freedom, they may be used to 
impose conditions such as, 
for instance, transversality or homogeneity on the field. 
This kind of method is of common use when the restricted operator is already known.  See, in particular, the work initiated by Dirac \cite{Dirac:1935zz, Dirac:1936fq}, extended by Fronsdal \cite{Fronsdal:1965zzb, Fronsdal:1974ew, Fronsdal:1975eq, PhysRevD.12.3819, Fronsdal:1978vb, Fang:1979hq} and continued by others \cite{Gazeau:1987nu,Gazeau:1999xn,Garidi:2003bg,Huguet:2006fe,queva:tel-00503186,Faci:2009un, Huguet:2013tv,Takook:2014paa,Zapata:2017gqg,Pethybridge:2021rwf},
where one builds in the embedding space a tensor calculus isomorphic to that of the embedded space. 

Here, we take a different approach,  no conditions are imposed on the one-form or scalar fields  in the embedding space (neither transversality nor homogeneity), and the restricted operator retains its most general form.
To summarize, an explicit relation between the Laplace-de~Rham operator on the embedding space and that on the embedded
space is proved for both $n+1$ and $n+2$ dimensional cases.  Precisely, the restricted operator is decomposed into two parts, the former being the  
Laplace-de~Rham operator of the considered pseudo-sphere, the latter contains terms involving intrinsic geometrical structures (normal 
derivative to the manifold, Lie derivative along dilation vectors,\ldots) and the field over the embedding space (that is unrestricted 
to the manifold). This last term, that we shall denote by AT 
(for ``Additional Term") in the sequel, is thus determined through 
\begin{equation*}
(\square_d \alpha)_{\sss \Sigma}=\square_{\sss \Sigma}\alpha_{\sss \Sigma} + \mbox{AT},
\end{equation*}
where $\square_d$ is the Laplace-de~Rham operator on the $d$-dimensional embedding space, $\Sigma$ is the embedded space, $\square_{\sss \Sigma}$ is the Laplace-de~Rham operator on $\Sigma$, and $(\square_d \alpha)_{\sss \Sigma}$,
  $\alpha_{\sss \Sigma}$, are restrictions to the embedded submanifold, in a sense to be  precised later, of respectively $\square_d \alpha$
 and $\alpha$.

The generic form of AT encompass many known situations as, in particular, the one 
where the restricted operator
corresponds to a one-form (or scalar) field which belongs to an unitary irreducible representations of the 
(A)dS group \cite{Huguet:2016szt}. Moreover, this form allows us to draw general conclusions 
on the freedom left in the interplay between operators in embedding and embedded spaces. 
Precisely, the geometry of the embedding being given, the additional term AT appearing in the  
restriction from the embedding space to the embedded  pseudo-sphere is completely determined by  the field. 
Conversely, if one considers on the embedded pseudo-sphere an operator $\square_{\sss \Sigma}\beta+\chi(\beta)$ where $\beta$  is a scalar field or a 1-form on $\Sigma$, and $\chi$ is any expression depending (or not) on $\beta$  
we will show that there always exists an extension $\varpi(\beta)$ of the field $\beta$ to the embedding space for which  AT is $\chi$, explicitly:
\begin{equation*}
(\square_d \varpi(\beta))_{\sss \Sigma}=\square_{\sss \Sigma}\beta + \chi(\beta).
\end{equation*}
There is therefore no constraint on the additional term which can be any desired expression.

The considered embedding spaces $\setR^{n+2}$ (respectively $\setR^{n+1}$)  are the  $n+2$ (respectively $n+1$) dimensional real vector spaces endowed with the metric 
$\eta$ left invariant under the conformal group SO$(2,n)$ (respectively the isometrical groups SO$(1,n)$ or SO$(2,n-1)$ depending on whether we consider dS or AdS space-time).  Note that, we consider the embedding space $\setR^{n+2}$ as our main goal to allow us further
generalizations (see the conclusion  Sec.~\ref{SEC-Conclusion}). In this respect, the embedding space $\setR^{n+1}$  
appears  as an intermediate step.

The paper is organized as follows.
In Sec.~\ref{Sec-GeomAndMethod} we summarize the steps and the methods 
used to derive the expression of the restricted operators, Sec.~\ref{Sec-GeneralExpressionofBoxa} gives
the expression of the Laplace-de~Rahm operators on one-forms and scalars in terms of general frames and
anholonomy coefficients. The restriction from $n+1$-dimensional space is considered 
in Sec.~\ref{SEC-Rn+1}, it is generalized to the restriction from  $n+2$-dimensional space in Sec.~\ref{SEC-Rn+2}.
The relation between restriction and extension is examined in Sec.~\ref{SEC-AdditionnalTerms}. In Sec.~\ref{SEC-Conclusion} we conclude and consider the possibility of the extension of this work to other space-times.
In appendix \ref{APP-FUE} a summary of differential geometry and its most used formulas, relevant to this article, is given.
The Weitzenböck formula is provided, with a proof in our notations, enabling us to extend our results from the Laplace-de Rham operator to the Laplace-Beltrami operator.
The calculation of the co-differential of a co-frame $\delta(e^a)$,
used in particular in Sec.~\ref{Sec-GeneralExpressionofBoxa}, is detailed in appendix \ref{APP-CalcDelta(e^a)}. 
The calculations of the formulas stated in the paper are detailed in \ref{APP-CalcEqBoxPhiRepOrth}--\ref{APP-IntrinsicFormOfEqScal}.

Throughout the paper except otherwise specified 
$a,b,c$ denote abstract indexes, while $\mu, \nu, \ldots = 0, \ldots, n-1$,  
and $A, B, \ldots = 0, \ldots, n$ and, finally, $\alpha, \beta, \ldots = 0, \ldots, n+1$ are related, respectively, to $\Sigma_n$, $\setR^{n+1}$, and $\setR^{n+2}$.
The canonical coordinates of a point $y$ of $\setR^{n+2}$ (respectively $\setR^{n+1}$) are denoted $\{y^{ \alpha}\}$ (respectively $\{y^{\sss A}\}$)  
and their associated Cartesian ortho-normal basis is denoted $\{\partial_\alpha\}$ (respectively $\{\partial_{\sss A}\}$). 

\section{Geometric context and method}\label{Sec-GeomAndMethod}
We proceed in three main steps. We first express the Laplace-de~Rham operator on a $d$-dimensional pseudo-Riemannian oriented manifold $(M,g)$, in terms of any ortho-normalized (local) field of frames and their associated anholonomy coefficients.   

The  $n+1$-dimensional case is then tackled. The  metric manifold $(M,g)$ is here the  embedding space $\setR^{n+1}$ endowed with
a pseudo-metric $\eta$. We define specific frames of $\setR^{n+1}$ adapted to 
any submanifold of $M$. They are built from orthonormal frames of $M$ completed by its outgoing normal, 
and extended to the embedding space. Anholonomic coefficients of theses frames are then computed 
 for the pseudo-sphere. 
The expressions of, both scalar and one-form,  Laplace-de~Rham operator obtained before are then particularized to  these frames 
and coefficients and then restricted to  the $n$-dimensional pseudo-sphere. The resulting expressions 
are rewritten using intrinsic, frame independent, quantities.

The restriction of the Laplace-de~Rham operator to $n$-dimensional (A)dS space from the 
embedding $\setR^{n+2}$ space is finally considered. The (A)dS space is obtained as the intersection of the null cone of $\setR^{n+2}$ and an  $n+1$-dimensional hyper-plane. In that hyper-plane,  the  (A)dS space is a pseudo-sphere and
we recover the geometry of the $\setR^{n+1}$ case. The frames designed for the restriction in the $\setR^{n+1}$ situation are 
extended to frames adapted to the $\setR^{n+2}$ situation. The restriction of operators is performed in complete analogy to the $\setR^{n+1}$ case. Here again, the restricted operators are re-expressed under an intrinsic form.

\section{Expression of $\square a$ and $\square \phi$ in ortho-normal frames}\label{Sec-GeneralExpressionofBoxa}

In order to restrict the Laplace-de Rham operator $\square$ we first need an expression of it which we can work with.
This is achieved by expressing it in an orthonormal basis with its anholonomy coefficients.

\begin{proposition}
Let $(M, g)$  a $d$-dimensional pseudo-Riemannian manifold endowed (at least locally) by a (pseudo-) 
ortho-normalized frame $e_a,\ a=1,\ldots,d$, with structure coefficients $c^a_{bc}$ which satisfy
\begin{equation*}
    [e_a,e_b]=c^c_{ab}e_c \mbox{ and }de^a=-\frac{1}{2}c^a_{bc}e^b\wedge e^c.
\end{equation*}
Then,  on a one-form or a scalar field, $a$ and $\phi$ respectively,
the Laplace-de~Rham operator reads
\begin{align}
\square a= [&
\eta^{bc} e_c(e_b(a_a)) 
+ c^c_{a b}e_c(a^b) 
+ \eta^{bc}c^d_{a c}e_b(a_{d}) 
- \eta^{bd} \eta_{na}c^n_{cd}e_b(a^c) \nonumber \\ 
&  
+ \eta^{bc}   c^p_{pc}e_b(a_a) 
-\eta^{bd}a_ce_b(c^c_{da}) 
- \eta^{bm}a_c c^c_{ba}c^p_{pm} 
+a^be_a(c^p_{pb})       \nonumber\\ 
&
- \frac{1}{2}a_c\eta^{mf}\eta^{bd}\eta_{na}c^c_{mb}c^n_{fd}]e^a, \label{EQ-FormGeneralCoeffStruct}
\end{align}
and
\begin{equation}\label{EQ-ScalarGeneralCoeffStruct}
 \square\phi =\eta^{ab}[e_ae_b(\phi)+e_a(\phi) c^p_{pb}].
\end{equation}
\end{proposition}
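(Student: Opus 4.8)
The plan is to turn $\square=-(d\delta+\delta d)$ into a computation with the Levi-Civita connection $\nabla$ of $g$ written in the frame $\{e_a\}$, and then to eliminate $\nabla$ in favour of the structure coefficients $c^{a}{}_{bc}$ using the orthonormality of the frame. Two Cartan-type identities carry most of the load: because $\nabla$ is torsion-free, $d=e^{a}\wedge\nabla_{e_a}$ on any form, and because $\nabla g=0$, $\delta=-\eta^{mn}\,\iota_{e_m}\nabla_{e_n}$, with $\iota$ the interior product, $\{e^{a}\}$ the dual coframe, and all indices summed. The connection coefficients $\Gamma^{a}{}_{bc}$, defined by $\nabla_{e_b}e_c=\Gamma^{a}{}_{bc}e_a$, are then given for an orthonormal frame by the collapsed Koszul formula $\Gamma_{abc}=\tfrac12\bigl(c_{abc}-c_{bca}+c_{cab}\bigr)$ (indices moved with $\eta$, in the sign convention fixed by the relations in the statement), since every $e_a g(e_b,e_c)$ contribution vanishes; its contraction $\eta^{mn}\Gamma^{a}{}_{mn}$ equals $\delta e^{a}=-\eta^{ab}c^{p}{}_{pb}$, the quantity computed in Appendix~\ref{APP-CalcDelta(e^a)}. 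With these ingredients the proposition reduces to an expansion and a relabelling.

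The scalar case is immediate: from $d\phi=e_a(\phi)e^{a}$ one finds $\delta d\phi=-\eta^{mn}\iota_{e_m}\nabla_{e_n}\bigl(e_a(\phi)e^{a}\bigr)=-\eta^{mn}e_n(e_m(\phi))+\eta^{mn}\Gamma^{a}{}_{nm}e_a(\phi)$, and inserting $\eta^{mn}\Gamma^{a}{}_{nm}=-\eta^{ab}c^{p}{}_{pb}$ together with $\eta^{mn}[e_m,e_n]=0$, which lets one freely reorder the second derivatives, gives $\square\phi=-\delta d\phi=\eta^{ab}\bigl[e_a(e_b(\phi))+e_a(\phi)c^{p}{}_{pb}\bigr]$, i.e.\ \eqref{EQ-ScalarGeneralCoeffStruct}.

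For the one-form I would proceed in four sub-steps with $a=a_b e^{b}$. (i) Compute $\delta a=-\eta^{bc}\bigl(e_c(a_b)+a_b c^{p}{}_{pc}\bigr)$ directly from the formula for $\delta$ and the coframe result. (ii) Compute $d\delta a=e_a(\delta a)\,e^{a}$; differentiating the product yields one contribution to the double derivative and the zeroth-order piece $a^{b}e_a(c^{p}{}_{pb})$. (iii) Write $da=\tfrac12 F_{bc}\,e^{b}\wedge e^{c}$ with $F_{bc}=e_b(a_c)-e_c(a_b)-a_d c^{d}{}_{bc}$. (iv) Apply $\delta=-\eta^{mn}\iota_{e_m}\nabla_{e_n}$ to $da$, letting $\nabla_{e_n}$ act by Leibniz on $F_{bc}$ and on $e^{b}\wedge e^{c}$ (the latter producing $\Gamma$-terms), contract with $\iota_{e_m}$, and replace every $\Gamma$ by the Koszul expression above. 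Finally, add $-(d\delta a+\delta da)$: the two second-order contributions combine, their non-commuting part of the form $\eta^{bc}[e_b,e_c]a_a$ generating an extra first-order term, and relabelling so that the free index sits on $a_a$ reads off \eqref{EQ-FormGeneralCoeffStruct}.

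The real work — and the only genuine obstacle — is the index bookkeeping in sub-step (iv): distributing $\nabla_{e_n}$ across $F_{bc}\,e^{b}\wedge e^{c}$, contracting, inserting the Koszul formula, and adding the commutator terms produces many monomials, and one has to check that, once the antisymmetry of $c^{a}{}_{bc}$ and the symmetry of $\eta$ are exploited, they collapse to exactly the nine listed terms: one second-order, five first-order each linear in a single $c^{a}{}_{bc}$, and three zeroth-order built from one derivative $e_b(c^{c}{}_{da})$ or one product $c\,c$. Nothing conceptual is involved and, in particular, no curvature identity is needed at this stage — the Weitzenböck rearrangement into a rough Laplacian plus a Ricci term is a separate matter, treated in Appendix~\ref{APP-FUE} — so this is purely a careful expansion, which is why the detailed manipulations are deferred to Appendices~\ref{APP-CalcDelta(e^a)} and \ref{APP-CalcEqBoxPhiRepOrth}.
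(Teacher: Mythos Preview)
Your approach is correct but genuinely different from the paper's.  The paper never introduces the Levi-Civita connection: it expands $\delta=\ast^{-1}d\ast\hat\eta$ directly through the Hodge star and the $i^{a},j^{a}$ algebra.  For the scalar, $\ast e^{a}=i^{a}\omega$ and $d\,i^{a}\omega=\mathcal{L}_{\tilde e^{a}}\omega=-\delta(e^{a})\,\omega$ do all the work.  For the one-form, the Maxwell piece is split as $-\delta(da_{b}\wedge e^{b})-\delta(a_{b}\,de^{b})$ (the paper's $I$ and $II$); each term is pushed through $\ast d\ast$ using $\ast(e^{b}\wedge e^{a})=i^{a}i^{b}\omega$, and the action of $d$ on $i^{a}i^{b}\omega$ is again unwound via Lie derivatives along $\tilde e^{a}$, so the structure coefficients appear directly and no $\Gamma$ ever enters.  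Your route---write $d=e^{a}\wedge\nabla_{e_{a}}$, $\delta=-\eta^{mn}\iota_{e_{m}}\nabla_{e_{n}}$, then eliminate $\Gamma$ by the orthonormal Koszul formula---is equally valid and perhaps more mechanical to automate, but it forces you to carry connection coefficients through the Leibniz expansion before they recombine into $c^{a}{}_{bc}$, whereas the paper's method stays connection-free throughout.  One small inconsistency: your last sentence defers the details to the paper's appendices as though they implement your scheme; they do not---they carry out the Hodge-star computation just described, so if you keep the $\nabla$-based argument you would need to supply your own bookkeeping for sub-step~(iv).
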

\begin{proof}
   Eq.~\eqref{EQ-ScalarGeneralCoeffStruct} is obtained through a straightforward calculation detailed in appendix \ref{APP-CalcEqBoxPhiRepOrth}.
   A lengthier calculation detailed in appendix \ref{APP-CalcEqBoxaRepOrth}, making use of  Eq.~\eqref{EQ-ScalarGeneralCoeffStruct},  establishes Eq.~\eqref{EQ-FormGeneralCoeffStruct}.
\end{proof}

\section{Restriction from $\setR^{n+1}$}\label{SEC-Rn+1}
\subsection{Adapted frames}\label{SUBSEC-AdaptedFramesRn+1}
Let $M_n$ be any $n$-dimensional submanifold  isometrically embedded in some oriented $d$-dimensional pseudo-euclidian space $E$. 
In this situation, for our purpose, it will be convenient to use a so-called local \emph{adapted frame}, that is to say a field of 
frames of $E$ adapted to this embedding. This is defined as follows.
\begin{definition}
Let $Q$ be a point of $M_n$ and $U_{\sss Q}$ a $E$-neighborhood of $Q$. A field of positively oriented ortho-normal frames of $E$:
$e_{\sss A}$, $A=0,\ldots,d-1$ on $U_{\sss Q}$ is said to be \emph{adapted to} $M_n$ iff
at each point of $V_{\sss Q}:=M_n \cap U_{\sss Q}$,   $e_\mu, \mu= 0,\ldots, n-1$ is a local field of  ortho-normal frames of $M_n$.
\end{definition}
\begin{remark}
     From this definition, necessarily $e_n,\ldots,e_{d-1}$ are orthogonal to $M_n$.
\end{remark}

In this section, we restrict $M_n$ to be
the $n$-dimensional pseudo-sphere denoted $\Sigma_n$ in $\setR^{n+1}$  and defined through 
\begin{equation*} 
y^2 := y^{\sss A}y_{\sss A} = -\epsilon H^{-2},
\end{equation*}
$H$ being a positive constant and
$\epsilon = \pm 1$. We assume that $\setR^{n+1}$ is oriented and that $\Sigma_n$ is oriented thanks to its outer normal.
In the following, we focus on the de~Sitter space obtained with $\epsilon=1$, the signature of $E$ being $(1,n)$, and the 
Anti-de~Sitter space  obtained with $\epsilon=-1$, the signature of $E$ being $(2,n-1)$.
Nevertheless, the final results of this section, Eq.~(\ref{EQ-RestricFormRn+1IntrinsicVersion}) and Eq.~(\ref{EQ-RestricScalarRn+1IntrinsicVersion}),  concerning the 
$\setR^{n+1}$ embedding remain valid in other cases and, in particular, for the euclidean sphere $S_n$  obtained with 
$\epsilon=-1$, the signature of $E$ being $(n+1,0)$. This is no more the case concerning the $\setR^{n+2}$ embedding.

\noindent\textbf{Adapted frame for $\Sigma_n$ in $\setR^{n+1}$.}
A set of adapted frames of $\Sigma_n$   can be constructed on $\setR^{n+1}$ as follows. In a neighborhood (in the sense of 
$\Sigma_n$)
$V_{\sss Q}$  of some point $Q$ of $\Sigma_n$, let $e_\mu$ be
a field of direct ortho-normal frames of $\Sigma_n$  defined at each point of $V_{\sss Q}$. This field of frames is then extended 
to a field on $\Sigma_n$ of frames of $\setR^{n+1}$ by 
adding the unit outer normal to $\Sigma_n$: $e_n $ which is, from the very definition of $M_n$, non isotropic.
Then, this whole field of frames ${e_{\sss A}}$ is extended to each point of the cone $U_{\sss Q}=\mathcal{C}(V_{\sss Q})$ of 
$\setR^{n+1}$, which is the union  of all the open half-lines coming from the origin of $\setR^{n+1}$ and crossing $V_{\sss Q}$. 
This is done by moving 
each frame identically to itself along the half-lines that intercept the origin of the frame. To this end, each component, in the 
canonical basis of $\setR^{n+1}$, of a vector that belongs 
to the frame of $U_{\sss Q}$  is required to be homogeneous of degree zero. 
Namely,  the $e_{\sss A}^{\sss B}$, defined through $e_{\sss A}=e_{\sss A}^{\sss B}\partial_{\sss B}$, are homogeneous of degree 
zero and, more explicitely, since for $y\in U_{\sss Q}$, we have $y/\sqrt{|y|^2}\in V_{\sss Q}$ we \emph{define} $e_{\sss A}^{\sss 
B}(y)$ through $e_{\sss A}^{\sss B}(y)=e_{\sss A}^{\sss B}(y/\sqrt{|y|^2})$.
We note that the unit outer normal explicitly reads
\begin{equation*}
    e_n = \frac{1}{\sqrt{|y^2|}}D,
\end{equation*}
where $D$ is the dilation vector $D = y^{\sss A}\partial_{\sss A}$.
We finally remark, that these choices lead to \linebreak $\eta_{nn} = \eta^{nn} = \sgn(y^2) = - \epsilon $ (in a neighborhood of
$Q$).
The adapted frame just built will be used in the following sections.

\subsection{Coefficients of anholonomy}\label{SUBSEC-CoeffAnholoRn+1}
We now compute the coefficients of anholonomy in the  field of adapted frames defined Sec.~\ref{SUBSEC-AdaptedFramesRn+1} 
using the general relation $[e_b,e_c]= c^a_{bc} e_a$. 

We first note that since $\{e_\mu\}$ is a frame of $\Sigma_n$,  a sub-manifold of $\setR^{n+1}$,
one has $[e_\mu,e_\nu]= c^\lambda_{\mu\nu} e_\lambda$. A first consequence is that the coefficients with Greek-index only 
do not change when restricted to $\Sigma_n$, they will appear inside the Laplace-de~Rham operator on $\Sigma_n$ only and there is 
no 
need to compute them explicitly, a second consequence is that $c^n_{\mu\nu}=0$.
Then the anholonomy coefficients we have to calculate are $c^\nu_{\mu n}$ and $c^n_{\mu n}$.

\begin{proposition}
    \begin{equation*}
         c^\nu_{\mu n}=\frac{1}{\sqrt{|y^2|}}\delta^\nu_\mu \mbox{ and } c^n_{\sss AB}=0.
    \end{equation*}
\end{proposition}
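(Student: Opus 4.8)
The plan is to use only two structural features of the adapted frame on the cone $U_{\sss Q}$: (i) the components $e_{\sss A}^{\sss B}$ are homogeneous of degree zero, and (ii) $e_n=|y^2|^{-1/2}D$ with $D=y^{\sss B}\partial_{\sss B}$ the dilation (Euler) field, together with the fact that the ambient connection $\nabla$ on $\setR^{n+1}$ is flat and torsion free. The first step is the identity $[D,e_{\sss A}]=-e_{\sss A}$ for every $A$: since $e_{\sss A}^{\sss B}$ is homogeneous of degree zero, Euler's relation gives $D(e_{\sss A}^{\sss B})=0$, while $e_{\sss A}(y^{\sss B})=e_{\sss A}^{\sss B}$, so that $[D,e_{\sss A}]^{\sss B}=D(e_{\sss A}^{\sss B})-e_{\sss A}(y^{\sss B})=-e_{\sss A}^{\sss B}$.

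The second step records that $e_\mu$ is $\eta$-orthogonal to the position vector $y$ at \emph{every} point of $U_{\sss Q}$: for $y\in U_{\sss Q}$ one has $e_\mu(y)=e_\mu(y/\sqrt{|y^2|})$, and at the point $\hat y=y/\sqrt{|y^2|}\in V_{\sss Q}\subset\Sigma_n$ this vector is tangent to $\Sigma_n$, hence orthogonal to $\hat y$, hence to $y$. Differentiating $y^2=\eta_{\sss BC}y^{\sss B}y^{\sss C}$ along $e_\mu$ then yields $e_\mu(y^2)=2\eta(e_\mu,y)=0$, so $e_\mu(\sqrt{|y^2|})=0$ and $e_\mu(|y^2|^{-1/2})=0$.

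With these in hand the brackets follow by short algebra. Writing $D=\sqrt{|y^2|}\,e_n$ and using step one, $[e_n,e_\mu]=|y^2|^{-1/2}[D,e_\mu]-e_\mu(|y^2|^{-1/2})D=-|y^2|^{-1/2}e_\mu$, the second term vanishing by step two; hence $[e_\mu,e_n]=|y^2|^{-1/2}e_\mu$, which reads $c^\nu_{\mu n}=|y^2|^{-1/2}\delta^\nu_\mu$ and $c^n_{\mu n}=0$. For $c^n_{\mu\nu}$, note the $e_n$-component of any vector $v$ is proportional to $\eta(v,D)=\eta(v,y)$; differentiating $\eta(e_\nu,y)=0$ along $e_\mu$ with $\nabla_X y=X$ gives $\eta(\nabla_{e_\mu}e_\nu,y)+\eta(e_\nu,e_\mu)=0$, and antisymmetrizing in $\mu,\nu$ together with $\nabla$ torsion free gives $\eta([e_\mu,e_\nu],y)=0$, i.e. $c^n_{\mu\nu}=0$. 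Combined with $c^n_{\mu n}=c^n_{n\mu}=0$ and the trivial $c^n_{nn}=0$, this establishes $c^n_{\sss AB}=0$.

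The one point requiring care — and the only real obstacle — is that every identity above must hold on the open cone $U_{\sss Q}$ rather than merely on $\Sigma_n$, in particular the orthogonality $\eta(e_\mu,y)=0$ and the annihilation $e_\mu(\sqrt{|y^2|})=0$ away from the pseudo-sphere; this is precisely where the degree-zero homogeneity of the extension is used, so that the frame at $y$ coincides (component-wise) with the frame at $\hat y\in\Sigma_n$. Everything else is elementary bracket manipulation on $\setR^{n+1}$.
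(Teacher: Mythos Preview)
Your proof is correct and, for the main bracket $[e_n,e_\mu]$, follows the same idea as the paper: both use degree-zero homogeneity of the frame components to obtain $[D,e_\mu]=-e_\mu$ together with $e_\mu(y^2)=0$; the paper phrases this by applying the bracket to a homogeneous test function of degree $r$, while you compute $[D,e_{\sss A}]^{\sss B}$ directly in Cartesian components, which is equivalent. For $c^n_{\mu\nu}=0$ the paper simply observes (just before the proposition) that $\{e_\mu\}$ is a frame of the submanifold $\Sigma_n$, hence involutive; your argument via the flat ambient connection and differentiation of $\eta(e_\nu,y)=0$ is a slightly different but equally valid route, with the advantage that it makes explicit why the vanishing holds on the whole cone $U_{\sss Q}$ and not just on $\Sigma_n$.
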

\begin{proof}
To obtain $c^\nu_{\mu n}$ we recall from Sec.~\ref{SUBSEC-AdaptedFramesRn+1} that the coefficients $e_\mu^\nu$ are homogeneous functions 
of degree zero, then  on an arbitrary homogeneous function $\varphi$ of degree $r$ one has:
\begin{align*}
[e_n,e_\mu]\varphi&=\frac{1}{\sqrt{|y^2|}}De_\mu\varphi-e_\mu\frac{1}{\sqrt{|y^2|}}D\varphi\\
&=\frac{r-1}{\sqrt{|y^2|}}e_\mu\varphi-e_\mu\left(\frac{1}{\sqrt{|y^2|}}\right)r\varphi -\frac{1}{\sqrt{|y^2|}}re_\mu\varphi\\
&=- \frac{e_\mu}{\sqrt{|y^2|}}\varphi,
\end{align*}
where, in the r.h.s of the second line, we used the fact that $e_\mu(y^2)=0$ since $e_\mu(y^2) = \langle dy^2,e_\mu\rangle$.  The anholonomy coefficients 
do not depend on the choice of $\varphi$, then the last line of the above calculation shows that
\begin{equation*}
     c^\nu_{\mu n}=\frac{1}{\sqrt{|y^2|}}\delta^\nu_\mu \mbox{ and } c^n_{\sss AB}=0,
\end{equation*}
as stated.
\end{proof}
\begin{remark}
    This expression leads to $c^{\sss A}_{{\sss A} n}= n/\sqrt{|y^2|}$ which corresponds to the result of Eq.~(\ref{EQ-Delta(e^a)}).
\end{remark}

Finally, the derivatives of the coefficients of anholonomy are obtained thanks to their homogeneity: since $e_{\sss A}^{\sss B}$ is homogeneous of degree
zero, $e_{\sss A} = e_{\sss A}^{\sss B} \partial_{\sss B}$, $\{\partial_{\sss B}\}$ being the canonical basis of $\setR^{n+1}$, is homogeneous of degree $-1$, it
follows from the general relation $[e_b,e_c]= c^a_{bc} e_a$, \linebreak that $c^{\sss A}_{\sss BC}$ are homogeneous of degree $-1$, consequently
\begin{equation*}
     e_n(c^{\sss A}_{\sss BC})=-\frac{1}{\sqrt{|y^2|}}c^{\sss A}_{\sss BC}\ \mbox{ and }\ e_\lambda(c^\nu_{\mu n})=0.
\end{equation*}

\subsection{Restriction to the pseudo-sphere $\Sigma_n$}
\subsubsection{Restriction of the one-form operator}\label{SUBSEC-RestRn+1OfOneForm}
We now restrict the one-form $\square_{n+1}a$ to the sphere $\Sigma_n$. This is obtained thanks to the pullback $l^*$ of the 
canonical injection $l: \Sigma_n \rightarrow \setR^{n+1}$.
\begin{notation}
If $\alpha$ is a $p$-form on $\setR^{n+1}$ we set
$\alpha_{\sss\Sigma} := l^*\alpha$. 
\end{notation}
From the definition of an adapted frame Sec.~\ref{SUBSEC-AdaptedFramesRn+1} one has $l^* e^\mu 
= e^\mu$ and 
$l^* e^n = 0$, then for $\alpha$ a one-form:
\begin{equation*}
    \alpha_{\sss \Sigma}=l^*\alpha = (l^*\alpha_{\sss A}) l^*(e^{\sss A}) =
    (\alpha_\mu)^{~}_{\sss\Sigma}\, e^\mu.
\end{equation*}
\begin{proposition}
The restricted Laplace-de~Rham operator on the one form: 
$(\square_{n+1}a)_{\sss \Sigma} =\square_{\sss \Sigma} a_{\sss \Sigma} + {}^n\!B_{\sss \Sigma}$, reads
\begin{equation}
    \label{eq:Restriction_Rn+1}
(\square_{n+1}a)_{\sss \Sigma} =\square_{\sss \Sigma} a_{\sss \Sigma} -\epsilon H^2\left[D^2(a_\mu)_{\sss \Sigma} e^\mu +(n-1)D(a_\mu)_{\sss \Sigma} e^\mu+2d_{\sss \Sigma}(i_{\sss D}a)  +(n-2)a_{\sss \Sigma}\right],
\end{equation}
where $d_{\sss \Sigma}= l^*d$.
\end{proposition}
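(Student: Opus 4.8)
The plan is to specialize the general coordinate-free formula \eqref{EQ-FormGeneralCoeffStruct} for $\square a$ to the adapted frame of $\Sigma_n$ in $\setR^{n+1}$, insert the anholonomy coefficients just computed, apply the pullback $l^\ast$, and then repackage the surviving non-$\Sigma$ terms into intrinsic objects. First I would split every contracted index in \eqref{EQ-FormGeneralCoeffStruct} into a Greek part and the single normal value $n$, using $\eta_{\mu n}=0$, $\eta^{nn}=-\epsilon$, $c^\lambda_{\mu\nu}$ purely intrinsic, $c^n_{\sss AB}=0$, and $c^\nu_{\mu n}=\tfrac1{\sqrt{|y^2|}}\delta^\nu_\mu$ (equivalently $H\delta^\nu_\mu$ on $\Sigma_n$ since $|y^2|=H^{-2}$ there). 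The terms in which all contracted indices are Greek reassemble, by the very form of the Proposition in Sec.~\ref{Sec-GeneralExpressionofBoxa} applied intrinsically on $\Sigma_n$, into $\square_{\sss\Sigma}a_{\sss\Sigma}$; this is the clean ``former part'' promised in the introduction. Everything else—pieces carrying at least one normal index or a normal derivative $e_n=\tfrac1{\sqrt{|y^2|}}D$—is the Additional Term ${}^n\!B_{\sss\Sigma}$, and the job is to show it equals $-\epsilon H^2[\,D^2(a_\mu)_{\sss\Sigma}e^\mu+(n-1)D(a_\mu)_{\sss\Sigma}e^\mu+2d_{\sss\Sigma}(i_{\sss D}a)+(n-2)a_{\sss\Sigma}\,]$.

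Next I would organize the normal-index contributions by type. The double-derivative term $\eta^{bc}e_c(e_b(a_a))$ gives a $\eta^{nn}e_n(e_n(a_\mu))=-\epsilon\,e_n^2(a_\mu)$ contribution; using $e_n=\tfrac1{\sqrt{|y^2|}}D$ and the homogeneity bookkeeping already established (the commutator $[e_n,\cdot]$ and $e_n(\tfrac1{\sqrt{|y^2|}})$ are controlled exactly as in the proof of the anholonomy proposition), this produces the $D^2$ term together with a first-order $D$ piece. The structure-coefficient terms linear in $e_b(a_\cdot)$ with one index equal to $n$—namely the $c^c_{ab}e_c(a^b)$, $\eta^{bc}c^d_{ac}e_b(a_d)$, $\eta^{bd}\eta_{na}c^n_{cd}e_b(a^c)$, and $\eta^{bc}c^p_{pc}e_b(a_a)$ terms—each collapse to multiples of $D(a_\mu)_{\sss\Sigma}$ or $D(a_n)$ after inserting $c^\nu_{\mu n}=H\delta^\nu_\mu$ and $c^{\sss A}_{{\sss A}n}=n/\sqrt{|y^2|}$; the $n$-dependent numerical coefficients $(n-1)$ and $(n-2)$ in the target should emerge precisely from these contractions over the $n$ Greek directions. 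Finally the zeroth-order terms $-\eta^{bd}a_ce_b(c^c_{da})$, $-\eta^{bm}a_cc^c_{ba}c^p_{pm}$, $a^be_a(c^p_{pb})$, and $-\tfrac12a_c\eta^{mf}\eta^{bd}\eta_{na}c^c_{mb}c^n_{fd}$, evaluated with $e_n(c^{\sss A}_{\sss BC})=-\tfrac1{\sqrt{|y^2|}}c^{\sss A}_{\sss BC}$, $e_\lambda(c^\nu_{\mu n})=0$, and the explicit values, contribute the remaining constant-times-$a_{\sss\Sigma}$ part and whatever is needed to complete the $(n-2)a_{\sss\Sigma}$ and lower-order $D$ coefficients.

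The genuinely nontrivial recombination is the $d_{\sss\Sigma}(i_{\sss D}a)$ term: it does not drop out of any single term of \eqref{EQ-FormGeneralCoeffStruct} but must be assembled from several pieces. The scalar $i_{\sss D}a=y^{\sss A}a_{\sss A}=\sqrt{|y^2|}\,a_n$ appears because $D$ has no $e_\mu$-component in the adapted frame, so the cross terms involving $a_n$ and $e_\mu$-derivatives (these sit in $\eta^{bc}c^d_{ac}e_b(a_d)$ with $d=n$, and in $c^c_{ab}e_c(a^b)$ type terms mixing Greek and normal indices) should conspire, after using $l^\ast$, $c^\nu_{\mu n}=H\delta^\nu_\mu$, and $l^\ast d=d_{\sss\Sigma}$, to give $e_\mu\big((i_{\sss D}a)_{\sss\Sigma}\big)e^\mu=d_{\sss\Sigma}(i_{\sss D}a)$ with the stated factor $2$. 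Identifying which terms carry the $a_n$-dependence and checking that their $e_\mu$-derivatives really combine into a total differential—rather than into $a_n$ times a connection-like object—is where I expect to spend the most care; a useful cross-check is that the final answer must be frame-independent, so any residual explicit appearance of $e_\mu$ or $c^\nu_{\mu n}$ beyond the combinations $\square_{\sss\Sigma}$, $D$, $d_{\sss\Sigma}i_{\sss D}$, $a_{\sss\Sigma}$ signals an algebra error. Once the one-form identity is in hand, the analogous (and much shorter) specialization of \eqref{EQ-ScalarGeneralCoeffStruct} with $c^{\sss A}_{{\sss A}n}=n/\sqrt{|y^2|}$ yields the companion scalar formula \eqref{EQ-RestricScalarRn+1IntrinsicVersion}. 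I would relegate the bulk of the index manipulation to the appendix (as the paper already signals) and keep in the main text only the statement that the Greek-only terms rebuild $\square_{\sss\Sigma}a_{\sss\Sigma}$ and that the remainder regroups into ${}^n\!B_{\sss\Sigma}$ as displayed.
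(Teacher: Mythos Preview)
Your proposal is correct and follows essentially the same route as the paper: split the indices in the general formula \eqref{EQ-FormGeneralCoeffStruct} into Greek and normal parts, observe that the purely Greek piece rebuilds $\square_{\sss\Sigma}a_{\sss\Sigma}$, and evaluate the remaining nine ${}^n\!B$ contributions term by term using $c^n_{\sss AB}=0$, $c^\nu_{\mu n}=\delta^\nu_\mu/\sqrt{|y^2|}$, and $a_n=i_{\sss D}a/\sqrt{|y^2|}$ (this is exactly the appendix computation the paper refers to). Your anticipation that the $d_{\sss\Sigma}(i_{\sss D}a)$ piece arises from the cross terms carrying $a_n$ and a tangential derivative is also how it works out in the paper's detailed bookkeeping.
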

\begin{proof}
In the restriction of $\square_{n+1}a$ 
from Eq.~(\ref{EQ-FormGeneralCoeffStruct}) only the $\mu$-component 
remains, the contribution to $e^n$ being mapped to zero, that is: 
$(\square_{n+1}a)_{\sss\Sigma} = \left[(\square_{n+1}a)_\mu\right]_{\sss\Sigma} e^\mu=: \left[B_\mu e^\mu\right]_{\sss \Sigma}$. 
The r.h.s. of
this last expression divides into two parts: one that contains terms 
with  Greek indexes only, another, that we shall denote ${}^n\!B$, 
contains terms indexed by $n$.
The former leads after restriction to the Laplace-de~Rham operator on $\Sigma_n$ acting on $a_{\sss\Sigma}$: $\square_{\sss\Sigma} 
(a_{\sss\Sigma})$, the latter is calculated in appendix \ref{APP-Calcul-sB}, one finds
\begin{equation}\label{EQ-coeff-sB}
    {}^n\!B= -\epsilon \frac{1}{|y^2|} [D^2(a_\mu)e^\mu +(n-1)D(a_\mu) e^\mu+2 e_\mu(i_{\sss D}a)e^\mu  +(n-2)a_\mu e^\mu].
\end{equation}
Adding this latter term to the former, with $|y^2| = H^{-2}$, proves the formula.
\end{proof}
\begin{remark}
For a transverse one-form ($y^{\sss A} a_{\sss A}=0$) the term
$d_{\sss \Sigma}(i_{\sss D}a)$ vanishes. Also, for $a$ homogeneous of  degree $r$ (thus $a_\mu$ homogeneous of degree $r-1$) the 
above 
expression reduces to
\begin{equation} 
(\square_{n+1}a)_{\sss \Sigma} = \square_{\sss \Sigma} a_{\sss \Sigma} - \epsilon H^2 [(r+1)(r+n-2)a_{\sss \Sigma}+ 2d_{\sss 
\Sigma}(i_{\sss D}a)  ].
\end{equation}
\end{remark}

Eq.~\eqref{eq:Restriction_Rn+1} might be recast in an alternate form containing the Lie derivative along $D$ which provides us the following key formula.

\begin{theorem} The Laplace-de Rham operator acting upon a one-form $a$ restricted from $\setR^{n+1}$ to $\Sigma_n$ reads:
\begin{equation}
    \label{EQ-RestricFormRn+1IntrinsicVersion}
    (\square_{n+1}a)_{\sss \Sigma} = \square_{\sss \Sigma} a_{\sss \Sigma} - \epsilon H^2 
[ {\cal L}^2_{\sss D}(a)+(n-3){\cal L}_{\sss D}(a) + 2d(i_{\sss D}a)  ]_{\sss \Sigma},
\end{equation}
where no reference frame is involved.
\end{theorem}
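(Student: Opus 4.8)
\emph{Proof proposal.} The plan is to take the already-established formula~\eqref{eq:Restriction_Rn+1} as the starting point and simply re-express the three ``non-intrinsic'' contributions $D^2(a_\mu)_{\sss\Sigma}\,e^\mu$, $(n-1)D(a_\mu)_{\sss\Sigma}\,e^\mu$ and $(n-2)a_{\sss\Sigma}$ in terms of the Lie derivative ${\cal L}_{\sss D}$ along the dilation field $D$; the fourth term $2d_{\sss\Sigma}(i_{\sss D}a)$ is already of the desired form $[2d(i_{\sss D}a)]_{\sss\Sigma}$ since $d_{\sss\Sigma}=l^*d$.

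The pivotal computation is the bracket $[D,e_\mu]$, evaluated on the whole cone $U_{\sss Q}$ and not merely on $\Sigma_n$. Writing $D=\sqrt{|y^2|}\,e_n$ and expanding,
\[
 [D,e_\mu]=\sqrt{|y^2|}\,[e_n,e_\mu]-\bigl(e_\mu\sqrt{|y^2|}\bigr)e_n,
\]
one uses the anholonomy coefficients from the preceding Proposition, $c^\nu_{\mu n}=\tfrac1{\sqrt{|y^2|}}\delta^\nu_\mu$ and $c^n_{\sss AB}=0$, together with $e_\mu(y^2)=0$, to obtain $[D,e_\mu]=-e_\mu$ on $U_{\sss Q}$. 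Then, from the identity $({\cal L}_X\omega)(Y)=X(\omega(Y))-\omega([X,Y])$ applied to the one-form $a$ with $Y=e_\mu$, one gets $({\cal L}_{\sss D}a)_\mu = D(a_\mu)+a_\mu$. Since ${\cal L}_{\sss D}a$ is again a one-form on $U_{\sss Q}$ and the relation $[D,e_\mu]=-e_\mu$ is field-independent, I iterate to get $({\cal L}_{\sss D}^2 a)_\mu = D^2(a_\mu)+2D(a_\mu)+a_\mu$, whence, inverting, $D(a_\mu)=({\cal L}_{\sss D}a)_\mu-a_\mu$ and $D^2(a_\mu)=({\cal L}_{\sss D}^2 a)_\mu-2({\cal L}_{\sss D}a)_\mu+a_\mu$.

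Substituting these into the bracket of~\eqref{eq:Restriction_Rn+1}, the coefficient of $a_\mu$ is $1-(n-1)+(n-2)=0$, so that term drops out, and the bracket reduces to $({\cal L}_{\sss D}^2 a)_\mu+(n-3)({\cal L}_{\sss D}a)_\mu$ plus the unchanged $2d_{\sss\Sigma}(i_{\sss D}a)$. Finally, using $l^*e^\mu=e^\mu$ and $l^*e^n=0$ one identifies $({\cal L}_{\sss D}^2 a)_\mu{}_{\sss\Sigma}\,e^\mu=l^*({\cal L}_{\sss D}^2 a)$ and likewise for ${\cal L}_{\sss D}a$, while $l^*d(i_{\sss D}a)=d_{\sss\Sigma}(i_{\sss D}a)$; collecting everything under a single pullback gives~\eqref{EQ-RestricFormRn+1IntrinsicVersion}. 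The only genuine subtlety --- the ``hard part'' --- is to justify $[D,e_\mu]=-e_\mu$ on the cone $U_{\sss Q}$ rather than only on the pseudo-sphere, and correspondingly to keep every manipulation at the level of one-forms and functions on $U_{\sss Q}$, taking the restriction $l^*$ only at the very end; once this is secured, the remainder is elementary bookkeeping.
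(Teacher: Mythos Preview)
Your proposal is correct and follows essentially the same route as the paper. The only cosmetic difference is that the paper states the key identity on the coframe, ${\cal L}_{\sss D}e^{\sss A}=e^{\sss A}$, and then reads off $D(a_{\sss A})e^{\sss A}={\cal L}_{\sss D}(a)-a$ and $D^2(a_{\sss A})e^{\sss A}={\cal L}^2_{\sss D}(a)-2{\cal L}_{\sss D}(a)+a$ directly at the level of one-forms, whereas you establish the dual statement $[D,e_\mu]=-e_\mu$ on the frame and extract the same relations componentwise via $({\cal L}_X\omega)(Y)=X(\omega(Y))-\omega([X,Y])$; the substitution and cancellation $1-(n-1)+(n-2)=0$ are then identical.
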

\begin{proof}
Since ${\cal L}_{\sss D} e^{\sss A}= e^{\sss A}$, thus  implying $D(a_{\sss A})e^{\sss A}= {\cal L}_{\sss 
D}(a)-a$ and $D^2(a_{\sss A})e^{\sss A}= {\cal L}^2_{\sss D}(a)- 2 {\cal L}_{\sss D}(a) +a$, then making these replacements in 
the term ${}^n\!B_{\sss \Sigma}$ recasts Eq.~\eqref{eq:Restriction_Rn+1} into the intrinsic form Eq.~\eqref{EQ-RestricFormRn+1IntrinsicVersion}.
\end{proof}
\begin{remark}[1]
Eq.~\eqref{EQ-RestricFormRn+1IntrinsicVersion} with $\epsilon=-1$ is still valid in the case of the euclidean sphere ($\Sigma_n=S_n$) embedded in the euclidean space $\setR^{n+1}$.
This can be proved through a straightforward adaptation of our proof.
\end{remark}
\begin{remark}[2]
    Thanks to the Weitzenböck  formula, see appendix \ref{APP-FUE}, Eq.~\eqref{EQ-RestricFormRn+1IntrinsicVersion} can be adapted to the Laplace-Beltrami operator:
  \begin{equation*}
    (\Delta_{n+1}a)_{\sss \Sigma} = \Delta_{\sss \Sigma} a_{\sss \Sigma} - \epsilon H^2 
[ {\cal L}^2_{\sss D}(a)+(n-3){\cal L}_{\sss D}(a) + 2d(i_{\sss D}a)  ]_{\sss \Sigma}-r(\sharp a_{\sss \Sigma},\mathrm{Id}),
\end{equation*}  
where $r$ is the Ricci tensor and $r(\sharp a_{\sss \Sigma},\mathrm{Id})$ is the 1-form such that $r(\sharp a_{\sss \Sigma},\mathrm{Id})(u)=r(\sharp a_{\sss \Sigma},u)$.
\end{remark}

\subsubsection{Restriction of the scalar operator}\label{SUBSEC-RestRn+1OfScalar}
For completeness, we note that the scalar operator can be restricted along the same lines as the one-form. The splitting between 
Greek-indexed and $n$-indexed term in Eq.~(\ref{EQ-ScalarGeneralCoeffStruct}) leads again to a first term
which restricts in $\square_{\sss \Sigma} \phi_{\sss \Sigma}$ and 
a second term which, keeping the notation of Sec.~\ref{SUBSEC-RestRn+1OfOneForm}
for simplicity, reads
\begin{equation*}
 {}^n\!B := \eta^{nn}[e_n^2(\phi) + e_n(\phi) c^{\sss A}_{{\sss A} n}].    
\end{equation*}
Thanks to the expressions of $\eta^{nn}$, $e_n$ and $c^{\sss A}_{{\sss A} n}$ 
Sec.~\ref{SUBSEC-AdaptedFramesRn+1}-\ref{SUBSEC-CoeffAnholoRn+1} one 
obtains
\begin{equation*}
 {}^n\!B := -\frac{\epsilon}{|y|^2}\left(D^2 +(n-1)D\right)\phi.    
\end{equation*}
Using the above formula then proves the following.
\begin{theorem}
The restriction of the Laplace-de~Rham operator on a scalar field $\phi$ reads
\begin{equation}\label{EQ-RestricScalarRn+1IntrinsicVersion} 
    (\square_{n+1}\phi)_{\sss \Sigma} 
    = \square_{\sss \Sigma}\phi_{\sss \Sigma} 
    -\epsilon H^2 [ D^2(\phi)+(n-1) D(\phi)]_{\sss \Sigma},
\end{equation}
independently of any frame.
\end{theorem}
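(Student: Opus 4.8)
The plan is to mirror the structure of the one-form case, which is already established, but working now with the scalar formula \eqref{EQ-ScalarGeneralCoeffStruct} instead of \eqref{EQ-FormGeneralCoeffStruct}. First I would start from the expression $\square_{n+1}\phi = \eta^{ab}[e_a e_b(\phi) + e_a(\phi)c^p_{pb}]$ and split the sum over $a,b$ into the purely Greek part (indices $\mu,\nu=0,\ldots,n-1$) and the part involving the normal index $n$. Using $\eta^{\mu n}=0$ (the adapted frame is orthonormal with $e_n$ normal), the cross terms drop out, so the split is clean: $\square_{n+1}\phi = \eta^{\mu\nu}[e_\mu e_\nu(\phi)+e_\mu(\phi)c^p_{p\nu}] + \eta^{nn}[e_n e_n(\phi)+e_n(\phi)c^p_{pn}]$, where the sum over $p$ in the first bracket still ranges over all $d=n+1$ values. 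Here I would invoke $c^n_{\mu\nu}=0$ and $c^\mu_{\nu n}$ from the proposition in Sec.~\ref{SUBSEC-CoeffAnholoRn+1} to show that $c^p_{p\nu}$ reduces, after restriction, to the intrinsic $c^\lambda_{\lambda\nu}$, so that the first bracket restricts exactly to $\square_{\sss\Sigma}\phi_{\sss\Sigma}$ by comparison with \eqref{EQ-ScalarGeneralCoeffStruct} applied on $\Sigma_n$.

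Next I would evaluate the normal term ${}^n\!B = \eta^{nn}[e_n^2(\phi)+e_n(\phi)c^{\sss A}_{{\sss A}n}]$, which is exactly the term already displayed in the text just before the theorem. Substituting $\eta^{nn}=-\epsilon$, $e_n = \frac{1}{\sqrt{|y^2|}}D$, and $c^{\sss A}_{{\sss A}n}=n/\sqrt{|y^2|}$ (from the remark following the anholonomy proposition), and using that $D$ applied to $\frac{1}{\sqrt{|y^2|}}$ gives $-\frac{1}{\sqrt{|y^2|}}\cdot\frac{1}{\sqrt{|y^2|}}$ since $\frac{1}{\sqrt{|y^2|}}$ is homogeneous of degree $-1$, one computes $e_n^2(\phi) = \frac{1}{|y^2|}(D^2 - D)(\phi)$; hence $e_n^2(\phi)+e_n(\phi)c^{\sss A}_{{\sss A}n} = \frac{1}{|y^2|}(D^2 + (n-1)D)(\phi)$, giving ${}^n\!B = -\frac{\epsilon}{|y^2|}(D^2+(n-1)D)\phi$ as the text states.

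Finally I would set $|y^2| = H^{-2}$ on $\Sigma_n$, restrict via $l^*$, and note that since $\phi$ is a scalar, $\phi_{\sss\Sigma}=l^*\phi$ and the derivatives $D^k(\phi)$ are again scalars, so $[D^2(\phi)+(n-1)D(\phi)]_{\sss\Sigma}$ is well-defined and frame-independent — no analogue of the ${\cal L}_{\sss D}e^{\sss A}=e^{\sss A}$ recasting is needed because there are no one-form components to worry about, and in particular no $d(i_{\sss D}\phi)$-type term arises. Adding ${}^n\!B$ restricted to $\square_{\sss\Sigma}\phi_{\sss\Sigma}$ yields \eqref{EQ-RestricScalarRn+1IntrinsicVersion}.

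The only genuine subtlety — and thus the step I would be most careful about — is the claim that the Greek part of $\eta^{ab}[e_a e_b(\phi)+e_a(\phi)c^p_{pb}]$ restricts precisely to the intrinsic operator $\square_{\sss\Sigma}\phi_{\sss\Sigma}$: one must check that $c^p_{p\nu} = c^\lambda_{\lambda\nu} + c^n_{n\nu}$ and that $c^n_{n\nu}=0$ (which follows from $c^n_{\sss AB}=0$), and that no hidden coupling through $e_\mu(e_n(\phi))$ or similar survives after applying $l^*$. Once that bookkeeping is done — it is the exact scalar analogue of the "terms with Greek indexes only" argument used in the proof of \eqref{eq:Restriction_Rn+1} — the rest is the short homogeneity computation above, and the theorem follows.
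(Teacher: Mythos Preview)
Your proposal is correct and follows essentially the same approach as the paper: split Eq.~\eqref{EQ-ScalarGeneralCoeffStruct} into the Greek-indexed part (which restricts to $\square_{\sss\Sigma}\phi_{\sss\Sigma}$) and the normal part ${}^n\!B=\eta^{nn}[e_n^2(\phi)+e_n(\phi)c^{\sss A}_{{\sss A}n}]$, then insert $\eta^{nn}=-\epsilon$, $e_n=D/\sqrt{|y^2|}$, $c^{\sss A}_{{\sss A}n}=n/\sqrt{|y^2|}$ and the homogeneity of $1/\sqrt{|y^2|}$ to obtain the displayed ${}^n\!B$, and finally set $|y^2|=H^{-2}$ on $\Sigma_n$. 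Your extra explicit check that $c^n_{n\nu}=0$ (from $c^n_{\sss AB}=0$) so that $c^p_{p\nu}$ collapses to $c^\lambda_{\lambda\nu}$ is exactly the bookkeeping the paper leaves implicit when it says the Greek part ``restricts in $\square_{\sss\Sigma}\phi_{\sss\Sigma}$''.
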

Once again this formula, with $\epsilon=-1$ is still valid in the case of the euclidean sphere ($\Sigma_n=S_n$) embedded in the 
euclidean space $\setR^{n+1}$.

\begin{remark}
If $\phi$ is homogeneous of 
degree $r$ the above expression reduces to
\begin{equation*} 
(\square_{n+1}\phi)_{\sss \Sigma} = \square_{\sss \Sigma}\phi_{\sss \Sigma} -\epsilon H^2 r( r+n-1) \phi_{\sss \Sigma}.
\end{equation*}
\end{remark}

\section{Restriction from $\setR^{n+2}$}\label{SEC-Rn+2}
\subsection{Adapted frames of $\setR^{n+2}$}\label{SUBSEC-AdaptedFramesRn+2}

Hereafter, $\setR^{n+2}$ stands for the oriented pseudo-euclidean space, the signature of the metrics being  $(2,n)$. 

\begin{notation}
 In $\setR^{n+2}$, we consider the $n+1$-dimensional 
plane $P$ defined by the equation $f(y)=1$ where $f$ is a homogeneous polynomial  of degree one. We assume that the normal vector field of $P$ is nowhere isotropic and we note  
$F=\sharp df$ this normal vector, and also $F^2=\epsilon H^2$ where $H>0$.
\end{notation}
\begin{proposition}
    The intersection $\Sigma_n$ of the null cone $\mathcal{C}$ of $\setR^{n+2}$ and $P$, endowed with the metrics induced from that of the ambient space, is the de~Sitter or Anti-de~Sitter space, whether $\epsilon=1$ or $\epsilon=-1$,  with radius $H$. The position of $F$, relative to the null cone of $\setR^{n+2}$,  is ``time-like" ($F^2>0$) for de~Sitter space and ``space-like" ($F^2<0$) for Anti-de~Sitter space. In this construction $P$ is identified with the $\setR^{n+1}$ of the previous section $P\simeq\setR^{n+1}$.
\end{proposition}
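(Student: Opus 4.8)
The plan is to verify the three separate claims of the Proposition by explicit identification of the relevant geometric objects in $\setR^{n+2}$, reducing everything to the $\setR^{n+1}$ picture already established.

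First I would show that $\Sigma_n := \mathcal{C}\cap P$, with the induced metric, is a pseudo-sphere inside the affine hyperplane $P$. The null cone is $\mathcal{C} = \{y : y^2 = 0\}$, and $P = \{y : f(y)=1\}$. Since $f$ is homogeneous of degree one, we may write $f(y) = \langle F_\flat, y\rangle$ for a fixed covector $F_\flat = df$, with $F = \sharp df$ the associated constant vector, $F^2 = \epsilon H^2 \neq 0$ by the non-isotropy assumption. Decompose any $y \in \setR^{n+2}$ as $y = \lambda F + z$ with $z$ in the hyperplane $F^\perp$ through the origin; then $f(y) = \lambda F^2 = \lambda\epsilon H^2$, so $P$ is the affine hyperplane $\{y = \epsilon H^{-2}F + z : z\in F^\perp\}$, which we identify with $\setR^{n+1} \simeq F^\perp$ (shifted). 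On this hyperplane the restriction of $\eta$ has signature $(2,n)$ minus the signature of the $F$-direction: if $F^2>0$ the induced signature on $P$ is $(1,n)$ (de Sitter ambient), and if $F^2<0$ it is $(2,n-1)$ (Anti-de Sitter ambient) — exactly the two signatures of the $\setR^{n+1}$ of the previous section. Then $y\in\Sigma_n$ means $y^2=0$ and $y = \epsilon H^{-2}F + z$; expanding, $0 = y^2 = \epsilon^2 H^{-4}F^2 + 2\epsilon H^{-2}\langle F_\flat, z\rangle + z^2 = \epsilon H^{-2} + z^2$ since $\langle F_\flat,z\rangle=0$, hence $z^2 = -\epsilon H^{-2}$. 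Writing $z = y - \epsilon H^{-2}F$ as the "position within $P$" this is precisely the pseudo-sphere equation $z^2 = -\epsilon H^{-2}$ of Sec.~\ref{SEC-Rn+1}, so $\Sigma_n$ is dS$_n$ for $\epsilon=1$ and AdS$_n$ for $\epsilon=-1$, with radius $H$.

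Next I would pin down the causal position of $F$: the statement that $F$ is "time-like relative to the null cone" when $F^2>0$ is just the observation that for the ambient signature $(2,n)$, a vector with positive square lies in the two-dimensional "positive" part of the cone structure, i.e. inside the light cone, whereas $F^2<0$ puts it outside; the sign of $F^2$ is $\epsilon H^2$, giving the claimed dictionary. Finally the identification $P\simeq\setR^{n+1}$ is the content of the first paragraph above — the induced metric on $P$ has the right signature, and parametrizing $P$ by $z\in F^\perp$ gives a flat $(n+1)$-dimensional pseudo-euclidean space in which $\Sigma_n$ sits exactly as the pseudo-sphere of the previous section, so all constructions there (adapted frames, anholonomy coefficients) transport verbatim.

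The main obstacle — really the only non-bookkeeping point — is getting the signature count on $P$ correct in both cases simultaneously and checking it matches the $(1,n)$ / $(2,n-1)$ split assumed in Sec.~\ref{SEC-Rn+1}; this hinges on the elementary but easy-to-slip linear-algebra fact that removing a one-dimensional subspace on which $\eta$ is positive definite (the $F$-line when $F^2>0$) lowers the count of pluses by one, and similarly for the negative case. Everything else is substitution into $y^2=0$ using $f(y)=1$ and $\langle df, \cdot\rangle$ annihilating $F^\perp$, together with the already-given fact $F^2=\epsilon H^2$.
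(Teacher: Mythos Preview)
Your proof is correct and follows essentially the same approach as the paper: both reduce the system $\{y^2=0,\ f(y)=1\}$ to the pseudo-sphere equation $z^2=-\epsilon H^{-2}$ inside the hyperplane $P$. The paper does this by fixing Cartesian coordinates with $f(y)=Hy^{n+1}$ and $\eta=\mathrm{diag}(+,-,\ldots,-,-\epsilon,\epsilon)$ and reading off the (A)dS equation directly, whereas you carry out the equivalent orthogonal decomposition $y=\epsilon H^{-2}F+z$, $z\in F^\perp$, in a coordinate-free manner; your explicit signature count on $F^\perp$ is what the paper leaves implicit in its choice of $\eta$.
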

\begin{proof}
We can, without loss of generality,  consider that the canonical Cartesian coordinates basis of $\setR^{n+2}$ is such that
\begin{equation}\label{EQ-f(y)}
    f(y) = H y^{n+1},\; H > 0,
\end{equation} and that
the metric reads $\eta = \mathrm{diag} (+, -, \ldots, -, -\epsilon, \epsilon)$.
See \cite{Huguet:2006fe} for another choice to control the zero curvature limit.
This implies that
\begin{equation}\label{EQ-FOnCoordn+1}
F= \sharp df = \epsilon H\partial_{n+1}.    
\end{equation}
In effect, written in this coordinate system,  the equations for the intersection of $\mathcal{C}$ and $P$
\begin{equation*}
\left\{
\begin{aligned}
    C(y) &= y^\mu y_\mu - \epsilon (y^n)^2 + \epsilon  (y^{n+1})^2 = 0,\\
    f(y) &= H y^{n+1} = 1,
\end{aligned}
    \right.
\end{equation*}
define $\Sigma_n$ as the de~Sitter, or Anti-de~Sitter space for, respectively, $\epsilon = +1$, and $\epsilon = -1$. 
\end{proof}

Note, for future use, that 
\begin{equation}\label{EQ-D=Dp+Dn+1}
D= y^\alpha \partial_\alpha = D_{\sss P} + D_{n+1}, 
\end{equation}
where $D_{\sss P} := y^{\sss A} \partial_{\sss A}$ is the  dilation operator of $P$, that appears in the $\setR^{n+1}$ reduction Sec.~\ref{SEC-Rn+1}, 
and $D_{n+1} := y^{n+1}\partial_{n+1}$.    

\noindent\textbf{Adapted frame for $\Sigma_n$ in $\setR^{n+2}$.}
Now, a field of  frames adapted to $\Sigma_n$ in $\setR^{n+2}$ is built as follows. 
 We first 
build the adapted frames to $ \Sigma_n$ in $U_{\sss Q}\subset P $: $\{e_\mu, e_n\}$, as described in Sec.~\ref{SUBSEC-AdaptedFramesRn+1}.   Then, one supplements each frame $\{e_\mu, e_n\}$
by the normal to $P$: $\partial_{n+1}$, located at the same point of $P$. Finally, we define the cylinder $W_{\sss Q}$ which is the union of all the ``vertical'' lines directed by $\partial_{n+1}$ and crossing $U_{\sss Q}$. We then extend the frames $\{e_\mu, e_n, \partial_{n+1}\}$ 
of $P$ to $W_{\sss Q}\subset\setR^{n+2}$ by  translating, without change, along the vertical lines. Namely,   for $y\in W_{\sss Q}$ we note $y_{\sss P}$ the orthogonal projection of $y$ on $P$ and \emph{define} $e_{\sss A}^{\sss B}(y)$ through the formula $e_{\sss A}^{\sss B}(y)=e_{\sss A}^{\sss B}(y_{\sss P})=e_{\sss A}^{\sss B}(y^0,\ldots,y^n)$ and, writing for simplicity $e_{n+1}=\partial_{n+1}$,  $e^{\sss A}_{n+1}=e^{n+1}_{\sss A}=0$ and $e_{n+1}^{n+1}=1$. We thus obtained an adapted frame $e_\alpha, \alpha=0,\ldots,n+1$ fulfilling the important properties $D(e_\alpha^\beta)= F(e_\alpha^\beta)=0$, that is: the coefficients are homogeneous of degree 0 and do not depend on $y^{n+1}$. 
This implies that $[e_\mu,\partial_{n+1}] = [e_n,\partial_{n+1}]=0$. 
That is, all anholonomy coefficients 
containing the index $n+1$ are equal to zero.
The adapted frame just built will be used in the following sections.

\subsection{Restriction to (A)dS spaces}

\subsubsection{Restriction of the one-form operator}\label{SUBSEC-RestRn+2OfOneForm}

In the present paragraph we derive the restriction of the one form $\square_{n+2} a$ to the submanifold $\Sigma_n$, the (A)dS
space.
We consider Eq.~(\ref{EQ-FormGeneralCoeffStruct}) as the starting point,
we apply on it the pullback $m^*$ of the canonical injection map $m: \Sigma_n \rightarrow \setR^{n+2}$.  
\begin{notation}
If $\alpha$ is a $p$-form on $\setR^{n+2}$ we set $\alpha_{\sss\Sigma} := m^*\alpha$. 
\end{notation}

From the definition of an adapted frame (see Sec.~\ref{SUBSEC-AdaptedFramesRn+1}) one has $m^* 
e^\mu = e^\mu$ and 
$m^* e^n= m^* e^{n+1} = 0$, then, if $\alpha$ is a 1-form,
\begin{equation*}
    \alpha_{\sss \Sigma}=m^*\alpha = (m^*\alpha_{\alpha}) m^*(e^{\alpha}) =
    (\alpha_\mu)^{~}_{\sss\Sigma}\, e^\mu.
\end{equation*}

\begin{proposition}
The restricted Laplace-de~Rham operator on the one form $(\square_{n+2}a)_{\sss \Sigma}$ reads
\begin{align}
\label{EQ-RestricFormRn+2ExtendedFrameVersion}
m^*(\square_{n+2}a)= \square_{\sss \Sigma} a_{\sss \Sigma}+m^*\Bigl[&\epsilon(\partial^2_{n+1} a_\mu)e^\mu \\
&- \epsilon H^2 
\left({\cal L}^2_{\sss D_{\sss P}}(a)+(n-3){\cal L}_{\sss D_{\sss P}}(a)+ 2\,d(i_{\sss D_{\sss P}}a)\right)\Bigr].\notag
\end{align}
\end{proposition}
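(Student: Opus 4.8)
The plan is to repeat, in the $\setR^{n+2}$ adapted frame of Section~\ref{SUBSEC-AdaptedFramesRn+2}, the argument used for $\setR^{n+1}$ in Section~\ref{SUBSEC-RestRn+1OfOneForm}, and to isolate the single new ingredient produced by the extra flat direction $\partial_{n+1}$. I start from Eq.~\eqref{EQ-FormGeneralCoeffStruct} written in that frame and apply $m^*$; since $m^*e^\mu=e^\mu$ and $m^*e^n=m^*e^{n+1}=0$, only the coefficient of $e^\mu$ survives, so it is enough to organise the summed indices in that coefficient into three groups: those taking a Greek value, those equal to $n$, and those equal to $n+1$.

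\emph{Greek part and $n$-part.} The terms carrying Greek indices only are formally the expression of the Laplace--de~Rham operator of $\Sigma_n$ in the frame $\{e_\mu\}$ (recall that on $\Sigma_n$ the $c^\lambda_{\mu\nu}$ are unchanged and $c^n_{\mu\nu}=0$), so after $m^*$ they produce $\square_{\sss\Sigma}a_{\sss\Sigma}$, exactly as in Section~\ref{SUBSEC-RestRn+1OfOneForm}. For the $n$-indexed terms I use that the $\setR^{n+2}$ adapted frame coincides on $P$ with the $\setR^{n+1}$ adapted frame of Section~\ref{SUBSEC-AdaptedFramesRn+1}, and more precisely that $e_n=\frac{1}{\sqrt{|y_{\sss P}^2|}}D_{\sss P}$ holds on the whole cylinder $W_{\sss Q}$: this follows from the defining properties $D(e_\alpha^\beta)=F(e_\alpha^\beta)=0$, since the coefficients $e_{\sss A}^{\sss B}$ with $A,B\le n$ are then homogeneous of degree zero under $D_{\sss P}$ and independent of $y^{n+1}$. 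Hence the anholonomy coefficients $c^\nu_{\mu n}$, $c^n_{\sss AB}$ and their derivatives are literally those of Section~\ref{SUBSEC-CoeffAnholoRn+1} with $D$ replaced by $D_{\sss P}$ and $|y^2|$ by $|y_{\sss P}^2|$, so the $n$-indexed terms reproduce ${}^n\!B$ of Eq.~\eqref{EQ-coeff-sB} with $D\to D_{\sss P}$. On $\Sigma_n$ one has $|y_{\sss P}^2|=H^{-2}$, and since ${\cal L}_{\sss D_{\sss P}}e^{\sss A}=e^{\sss A}$ for $A=0,\ldots,n$, the substitutions $D_{\sss P}(a_{\sss A})e^{\sss A}={\cal L}_{\sss D_{\sss P}}(a)-a$ and $D_{\sss P}^2(a_{\sss A})e^{\sss A}={\cal L}^2_{\sss D_{\sss P}}(a)-2{\cal L}_{\sss D_{\sss P}}(a)+a$ (in their $e^\mu$-parts, the only ones that matter inside $m^*$) turn this contribution into $-\epsilon H^2\bigl[{\cal L}^2_{\sss D_{\sss P}}(a)+(n-3){\cal L}_{\sss D_{\sss P}}(a)+2\,d(i_{\sss D_{\sss P}}a)\bigr]_{\sss\Sigma}$, just as in the passage from Eq.~\eqref{eq:Restriction_Rn+1} to Eq.~\eqref{EQ-RestricFormRn+1IntrinsicVersion}.

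\emph{$n+1$-part.} Here the crucial fact, established in Section~\ref{SUBSEC-AdaptedFramesRn+2}, is that every anholonomy coefficient bearing the index $n+1$ vanishes ($[e_\mu,\partial_{n+1}]=[e_n,\partial_{n+1}]=0$), while $e_{n+1}=\partial_{n+1}$ and $\eta^{n+1\, n+1}=\epsilon$. Every term of Eq.~\eqref{EQ-FormGeneralCoeffStruct} other than the leading one $\eta^{bc}e_c(e_b(a_a))$ carries at least one structure coefficient, or a derivative of one, hence cannot contribute when a summed index equals $n+1$; moreover the term with $\eta_{na}$ already drops out of the $\mu$-component since $\eta_{n\mu}=0$. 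In the leading term the diagonality of $\eta$ forces $b=c$, so the only surviving $n+1$-contribution is $\eta^{n+1\, n+1}e_{n+1}(e_{n+1}(a_\mu))\,e^\mu=\epsilon(\partial_{n+1}^2 a_\mu)e^\mu$, which cannot be recast intrinsically because $\partial_{n+1}$ is transverse both to $P$ and to $\Sigma_n$. Collecting the three groups and applying $m^*$ yields Eq.~\eqref{EQ-RestricFormRn+2ExtendedFrameVersion}.

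\emph{Main difficulty.} There is no hard estimate here; the delicate point is the bookkeeping in Eq.~\eqref{EQ-FormGeneralCoeffStruct}. One must verify that no term mixes the indices $n$ and $n+1$ — none do, because such a term would carry a structure coefficient with an $n+1$ index, which vanishes — so that the $n$-part is \emph{exactly} the $\setR^{n+1}$ computation under $D\to D_{\sss P}$ while the $y^{n+1}$-direction decouples into a single flat second derivative. The only geometric claim needing a short separate check is that $e_n=\frac{1}{\sqrt{|y_{\sss P}^2|}}D_{\sss P}$ and ${\cal L}_{\sss D_{\sss P}}e^{\sss A}=e^{\sss A}$ ($A\le n$) persist off $P$, i.e.\ on all of $W_{\sss Q}$; both follow immediately from $D(e_\alpha^\beta)=F(e_\alpha^\beta)=0$ together with $D=D_{\sss P}+D_{n+1}$ (Eq.~\eqref{EQ-D=Dp+Dn+1}).
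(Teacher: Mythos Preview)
Your proof is correct and follows essentially the same route as the paper: split the $\mu$-component of Eq.~\eqref{EQ-FormGeneralCoeffStruct} into Greek-only, $n$-indexed, and $(n{+}1)$-indexed parts, identify the first two with the $\setR^{n+1}$ computation under $D\to D_{\sss P}$, $|y^2|\to|y^2_{\sss P}|$, and observe that the third collapses to $\epsilon(\partial_{n+1}^2 a_\mu)e^\mu$ because every other term carries an anholonomy coefficient with an $n{+}1$ index. Your write-up is more explicit than the paper's (in particular you spell out the passage to the ${\cal L}_{\sss D_{\sss P}}$ form, which the paper borrows tacitly from Theorem~1); one small caveat: in your parenthetical ``the term with $\eta_{na}$ already drops out since $\eta_{n\mu}=0$'', the symbol $n$ in $\eta_{na}$ of Eq.~\eqref{EQ-FormGeneralCoeffStruct} is a summed index, not the fixed normal direction, so that remark is misleading --- but it is superfluous anyway, since your structure-coefficient argument already disposes of every $(n{+}1)$-contribution.
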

\begin{proof}
The calculation parallels that of Sec.~\ref{SUBSEC-RestRn+1OfOneForm}.
The r.h.s. of the pullback by $m$ of 
Eq.~(\ref{EQ-FormGeneralCoeffStruct}) divides now 
into three parts: the first one, containing only Greek indexes $\mu, \nu, \ldots$, leads after restriction to the Laplace-de~Rham 
operator on $\Sigma_n$ acting on 
$a_{\sss\Sigma}$: $\square_{\sss\Sigma} (a_{\sss\Sigma})$, the second one denoted ${}^{n}\!B$ as in Sec.~\ref{SUBSEC-RestRn+1OfOneForm},
contains terms indexed by $n$
but not by $n+1$, the last one contains terms indexed by $n+1$, and since anholonomy  coefficients involving $n+1$ are equal to 
zero, it reduces to 
$m^* \epsilon(\partial^2_v a_\mu)e^\mu$. The calculation of  ${}^{n}\!B$ is the same as in the $\setR^{n+1}$ embedding space Sec.~\ref{SUBSEC-RestRn+1OfOneForm}, 
one obtains thus Eq.~(\ref{EQ-coeff-sB})
with the replacements $D \rightarrow D_{\sss P}$ and $y^2 \rightarrow y^2_{\sss P} := y^{\sss A}y_{\sss A}$, that accounts for the embedding space $\setR^{n+2}$ in place of $\setR^{n+1}$.
\end{proof}

Finally, the second term of the r.h.s. of this equation can be recast under an intrinsic form, that is without reference to 
specific frames.

\begin{theorem} The Laplace-de Rham operator acting upon the one-form a restricted from
$\setR^{n+2}$ to $\Sigma_n$ reads:
\begin{align}\label{EQ-RestricFormRn+2IntrinsicVersion}
m^*(\square_{n+2}a) = \square_{\sss \Sigma} a_{\sss \Sigma}
+ m^*&\left[ -\epsilon H^2\left( 
\mathcal{L}_{\sss D}^2 + (n-3)\mathcal{L}_{\sss D} + 2 d i_{\sss D}\right) 
+ 2\mathcal{L}_{\sss F}\mathcal{L}_{\sss D} \right.\nonumber\\
&\ \left. + (n - 4) \mathcal{L}_{\sss F} + 2 d i_{\sss F}\right](a)\nonumber\\
= \square_{\sss \Sigma} a_{\sss \Sigma}
+ m^*&\left[ -\epsilon H^2\left( 
\mathcal{L}_{\sss D}^2 + (n-3)\mathcal{L}_{\sss D} + 2 d i_{\sss D}\right) 
+ 2\mathcal{L}_{\sss D}\mathcal{L}_{\sss F} \right.\nonumber\\
&\  \left. + n \mathcal{L}_{\sss F} + 2 d i_{\sss F}\right](a),
\end{align}
\end{theorem}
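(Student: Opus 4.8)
The plan is to take the frame expression Eq.~\eqref{EQ-RestricFormRn+2ExtendedFrameVersion} as the starting point and to rewrite its bracket — which carries the surviving vertical term $\epsilon(\partial^2_{n+1}a_\mu)e^\mu$ together with the three $D_P$-terms inherited from the $\setR^{n+1}$ reduction — as a frame-free combination of Lie derivatives along $D$ and $F$. First I would collect the geometric input: $F=\epsilon H\partial_{n+1}$ and $F^2=\epsilon H^2$ from Eq.~\eqref{EQ-FOnCoordn+1}; $D=D_P+D_{n+1}$ with $D_{n+1}=y^{n+1}\partial_{n+1}$ from Eq.~\eqref{EQ-D=Dp+Dn+1}; the vanishing of every anholonomy coefficient carrying the index $n+1$; and $m^*e^{n+1}=0$ together with $y^{n+1}=1/H$ on $\Sigma_n$ (since $f=Hy^{n+1}=1$ there). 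Cartan's formula $\mathcal{L}_X=d\,i_X+i_X\,d$ and $\mathcal{L}_{e_b}e^a=-c^a_{bc}e^c$ then give $\mathcal{L}_F e^\alpha=0$ and $\mathcal{L}_D e^\alpha=e^\alpha$, hence $\mathcal{L}_F a=\epsilon H(\partial_{n+1}a_\alpha)e^\alpha$ and $\mathcal{L}_F^2 a=H^2(\partial^2_{n+1}a_\alpha)e^\alpha$; since $m^*$ kills the $e^n$- and $e^{n+1}$-components, this already yields $m^*[\epsilon(\partial^2_{n+1}a_\mu)e^\mu]=\epsilon H^{-2}m^*(\mathcal{L}_F^2 a)$.

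The heart of the proof is a short dictionary converting $D_{n+1}$-objects into $F$-objects after pullback. Setting $v:=y^{n+1}$ one has $D_{n+1}=(\epsilon v/H)\,F$, so by $\mathcal{L}_{hX}=h\,\mathcal{L}_X+dh\wedge i_X$, and modulo the differential ideal $\mathcal{I}$ generated by $e^{n+1}=dv$ — which is stable under $d$, $\mathcal{L}_D$, $\mathcal{L}_F$ and is annihilated by $m^*$ — one has $\mathcal{L}_{D_{n+1}}\equiv(\epsilon v/H)\,\mathcal{L}_F$ as operators on forms, and $i_{D_{n+1}}a=(\epsilon v/H)\,i_F a$. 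Using in addition $Dv=v$, $[D,D_{n+1}]=0$ (so $\mathcal{L}_D$ commutes with $\mathcal{L}_{D_{n+1}}$) and $[D,F]=-F$ (so $\mathcal{L}_D\mathcal{L}_F=\mathcal{L}_F\mathcal{L}_D-\mathcal{L}_F$), and evaluating $v=1/H$ through $m^*$, I would establish
\[
m^*(\mathcal{L}_{D_{n+1}}a)=\epsilon H^{-2}m^*(\mathcal{L}_F a),\qquad m^*(d\,i_{D_{n+1}}a)=\epsilon H^{-2}m^*(d\,i_F a),
\]
\[
m^*(\mathcal{L}_{D_{n+1}}^2 a)=\epsilon H^{-2}m^*(\mathcal{L}_F a)+H^{-4}m^*(\mathcal{L}_F^2 a),\qquad m^*(\mathcal{L}_D\mathcal{L}_{D_{n+1}}a)=\epsilon H^{-2}m^*(\mathcal{L}_F\mathcal{L}_D a).
\]

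The conclusion is then pure bookkeeping. Substituting $\mathcal{L}_{D_P}=\mathcal{L}_D-\mathcal{L}_{D_{n+1}}$ and $i_{D_P}=i_D-i_{D_{n+1}}$ into Eq.~\eqref{EQ-RestricFormRn+2ExtendedFrameVersion}, expanding $\mathcal{L}_{D_P}^2=\mathcal{L}_D^2-2\mathcal{L}_D\mathcal{L}_{D_{n+1}}+\mathcal{L}_{D_{n+1}}^2$, applying $m^*$ and inserting the four identities, the $-\epsilon H^2\,\mathcal{L}_{D_{n+1}}^2$ contribution produces precisely the term $-\epsilon H^{-2}m^*(\mathcal{L}_F^2 a)$ that cancels $m^*[\epsilon(\partial^2_{n+1}a_\mu)e^\mu]$; collecting the remaining terms gives $m^*[-\epsilon H^2(\mathcal{L}_D^2+(n-3)\mathcal{L}_D+2d\,i_D)+2\mathcal{L}_F\mathcal{L}_D+(n-4)\mathcal{L}_F+2d\,i_F](a)$, i.e.\ the first form of Eq.~\eqref{EQ-RestricFormRn+2IntrinsicVersion}. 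The second form follows by commuting $\mathcal{L}_F$ past $\mathcal{L}_D$ through $[\mathcal{L}_F,\mathcal{L}_D]=\mathcal{L}_{[F,D]}=\mathcal{L}_F$, which trades $2\mathcal{L}_F\mathcal{L}_D$ for $2\mathcal{L}_D\mathcal{L}_F$ and reshuffles the coefficient of $\mathcal{L}_F$ accordingly.

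I expect the obstacle to be the dictionary, in particular the identity $m^*(\mathcal{L}_D\mathcal{L}_{D_{n+1}}a)=\epsilon H^{-2}m^*(\mathcal{L}_F\mathcal{L}_D a)$: one must check that $\mathcal{L}_D$ (and $\mathcal{L}_F$) applied to a member of $\mathcal{I}$ stays in $\mathcal{I}$, so that the mod-$\mathcal{I}$ reductions survive composition, and that the factor $Dv=v$ arising on differentiating the coefficient $\epsilon v/H$ cancels exactly the commutator correction coming from $[D,F]=-F$, before one is entitled to set $v=1/H$. The remaining ingredients — $\mathcal{L}_F e^\alpha=0$, $\mathcal{L}_D e^\alpha=e^\alpha$, the vanishing brackets $[D,D_{n+1}]=[D_P,\partial_{n+1}]=0$, and the final collection of numerical coefficients — are routine.
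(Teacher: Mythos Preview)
Your approach is essentially the paper's own: start from Eq.~\eqref{EQ-RestricFormRn+2ExtendedFrameVersion}, rewrite the vertical term via $\mathcal{L}_{\partial_{n+1}}e^\mu=0$, replace $D_P$ by $D-D_{n+1}$ using $\mathcal{L}_{hX}=h\mathcal{L}_X+dh\wedge i_X$, exploit $m^*dv=0$ and $m^*v=H^{-1}$, and collect. Your ``mod~$\mathcal{I}$'' packaging is a clean way to say exactly what the paper does term by term in appendix~\ref{APP-IntrinsicFormOfEqVect}; the four identities you list coincide with the paper's Eqs.~\eqref{EQ-m*L_Dn+1}--\eqref{EQ-m*di_DP}, and the $\mathcal{L}_F^2$ cancellation you describe is precisely what happens there.

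One point of care: you compute $[F,D]=F$ (equivalently $[D,F]=-F$), which is correct for $F=\epsilon H\,\partial_{n+1}$ and $D=y^\alpha\partial_\alpha$. The paper, however, invokes $[F,D]=2F$ to pass from the first displayed form of Eq.~\eqref{EQ-RestricFormRn+2IntrinsicVersion} to the second. With your (correct) commutator one gets $2\mathcal{L}_F\mathcal{L}_D=2\mathcal{L}_D\mathcal{L}_F+2\mathcal{L}_F$, hence the coefficient of $\mathcal{L}_F$ in the second form becomes $n-2$ rather than $n$. So your derivation of the \emph{first} form is complete and matches the paper; your passage to the second form is methodologically right but will not reproduce the stated coefficient~$n$ --- this is a discrepancy in the paper's commutator claim, not in your argument.
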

\begin{proof}
The details of the calculation can be found in appendix \ref{APP-IntrinsicFormOfEqVect}.
Switching between both expressions makes use of  $[F,D]=2F$,  due to the homogeneity of $f$.\end{proof}

\begin{remark}
    Thanks to the Weitzenböck formula, see appendix \ref{APP-FUE}, Eq.~\eqref{EQ-RestricFormRn+2IntrinsicVersion} can be adapted to the Laplace-Beltrami operator:
  \begin{align*}
    (\Delta_{n+2}a)_{\sss \Sigma} = \Delta_{\sss \Sigma} a_{\sss \Sigma} + m^*&\left[ -\epsilon H^2\left( 
\mathcal{L}_{\sss D}^2 + (n-3)\mathcal{L}_{\sss D} + 2 d i_{\sss D}\right) 
+ 2\mathcal{L}_{\sss F}\mathcal{L}_{\sss D} \right.\\
&\ \left. + (n - 4) \mathcal{L}_{\sss F} + 2 d i_{\sss F}\right](a)-r(\sharp a_{\sss \Sigma},\mathrm{Id}),
\end{align*}  
where $r$ is the Ricci tensor and $r(\sharp a_{\sss \Sigma},\mathrm{Id})$ is the 1-form such that $r(\sharp a_{\sss \Sigma},\mathrm{Id})(u)=r(\sharp a_{\sss \Sigma},u)$.
\end{remark}

\subsubsection{Restriction of the scalar operator}\label{SUBSEC-RestRn+2OfScalar}

Let us consider again the scalar case. 
Eq.~(\ref{EQ-ScalarGeneralCoeffStruct}) for a, not necessarily homogeneous, 
scalar field $\phi$
splits, as in the one-form case, into three terms corresponding to those containing Greek indexes only, those involving the index 
$n$, and those 
indexed by $n+1$. The part relating to Greek indexes only, leads, upon reduction, to the scalar Laplace-de~Rham operator on 
$\Sigma_n$, the (A)dS space, 
the others are calculated in a straightforward way in appendix \ref{APP-IntrinsicFormOfEqScal}.
This proves the following formula.
\begin{theorem}
The Laplace-de Rham operator acting upon a scalar field $\phi$ restricted from $\setR^{n+2}$ to $\Sigma_n$ reads:
\begin{align}\label{EQ-RestricScalarRn+2IntrinsicVersion}
    m^*(\square_{n+2}\phi) &= \square_{\sss \Sigma}\phi_{\sss \Sigma} -\epsilon H^2\left[D^2\phi +(n-1)D\phi\right]_{\sss \Sigma} 
    +\left[2FD\phi+(n-2)F\phi\right]_{\sss \Sigma}
    \nonumber\\
    &=\square_{\sss \Sigma}\phi_{\sss \Sigma} -\epsilon H^2\left[D^2\phi +(n-1)D\phi\right]_{\sss \Sigma} 
    +\left[2DF\phi+(n+2)F\phi\right]_{\sss \Sigma}.
\end{align}
\end{theorem}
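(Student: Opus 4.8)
The plan is to follow the blueprint already used for the one-form case in Sec.~\ref{SUBSEC-RestRn+2OfOneForm} and for the scalar case in the $\setR^{n+1}$ reduction of Sec.~\ref{SUBSEC-RestRn+1OfScalar}. Starting from Eq.~\eqref{EQ-ScalarGeneralCoeffStruct} written in the adapted frame of $\setR^{n+2}$ built in Sec.~\ref{SUBSEC-AdaptedFramesRn+2}, I would split the sum $\eta^{ab}[e_ae_b(\phi)+e_a(\phi)c^p_{pb}]$ according to whether the summed indices are Greek ($\mu,\nu$), equal to $n$, or equal to $n+1$. The purely Greek part reproduces, upon applying $m^*$, the intrinsic scalar Laplace--de~Rham operator $\square_{\sss\Sigma}\phi_{\sss\Sigma}$, exactly as noted below Eq.~\eqref{EQ-ScalarGeneralCoeffStruct} in the $\setR^{n+1}$ discussion. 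The $n$-indexed part is literally the ${}^n\!B$ computed in Sec.~\ref{SUBSEC-RestRn+1OfScalar}, with the sole substitution $D\to D_{\sss P}$, $y^2\to y^2_{\sss P}$, giving $-\epsilon H^2[D_{\sss P}^2\phi+(n-1)D_{\sss P}\phi]_{\sss\Sigma}$. The only genuinely new ingredient is the $n{+}1$-indexed part.

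For the $(n{+}1)$-block, I would use the key facts established in Sec.~\ref{SUBSEC-AdaptedFramesRn+2}: $e_{n+1}=\partial_{n+1}$, all anholonomy coefficients carrying the index $n{+}1$ vanish, and $\eta^{n+1,n+1}=\epsilon$. Hence the $a=b=n+1$ term contributes simply $\epsilon\,\partial_{n+1}^2\phi$, while the term $\eta^{n+1,b}e_{n+1}(\phi)c^p_{p,b}$ needs the trace $c^p_{p,n+1}$; but $c^{\sss A}_{{\sss A},n+1}=0$ and $c^{n+1}_{n+1,\alpha}=0$, so this term drops. So the $n{+}1$-block reduces to $\epsilon\,\partial_{n+1}^2\phi$. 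The next step is to re-express the $D_{\sss P}$-terms and the $\partial_{n+1}$-term in terms of the full dilation $D=D_{\sss P}+D_{n+1}$ and the normal $F=\epsilon H\partial_{n+1}$ of Eqs.~\eqref{EQ-D=Dp+Dn+1}--\eqref{EQ-FOnCoordn+1}. Writing $D_{\sss P}=D-D_{n+1}$ and noting $D_{n+1}=y^{n+1}\partial_{n+1}$, while $F=\epsilon H\partial_{n+1}$ so that $\partial_{n+1}^2=\epsilon H^{-2}F^2$ up to the first-order correction from $F$ acting on $y^{n+1}$, I would collect all the pieces. Because $f(y)=Hy^{n+1}$ equals $1$ on $\Sigma_n$, on the submanifold $y^{n+1}=H^{-1}$, so $D_{n+1}$ and $F$ are proportional there; this is what converts the $D_{\sss P}$-expressions plus $\epsilon\partial_{n+1}^2\phi$ into the stated combination $-\epsilon H^2[D^2\phi+(n-1)D\phi]_{\sss\Sigma}+[2FD\phi+(n-2)F\phi]_{\sss\Sigma}$, and this is precisely the computation relegated to appendix~\ref{APP-IntrinsicFormOfEqScal}.

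Finally, the passage between the two displayed forms of Eq.~\eqref{EQ-RestricScalarRn+2IntrinsicVersion} is the commutator identity $[F,D]=2F$ already invoked for the one-form theorem, which follows from $f$ being homogeneous of degree one: $2FD\phi+(n-2)F\phi = 2(DF+[F,D])\phi+(n-2)F\phi = 2DF\phi+(n+2)F\phi$. I expect the main obstacle to be purely bookkeeping: carefully tracking the first-order terms produced when $F$ (equivalently $\partial_{n+1}$) differentiates the explicit coordinate $y^{n+1}$ hidden inside $D_{\sss P}=D-y^{n+1}\partial_{n+1}$, so that no spurious term survives and the coefficients $(n-2)$ and $(n+2)$ come out correctly; since nothing here involves the more delicate one-form structure coefficients, there is no real conceptual difficulty, and the proof amounts to assembling the three blocks and citing appendix~\ref{APP-IntrinsicFormOfEqScal} together with $[F,D]=2F$.
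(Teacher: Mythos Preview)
Your proposal is correct and follows essentially the same route as the paper: the three-block splitting of Eq.~\eqref{EQ-ScalarGeneralCoeffStruct} into Greek, $n$, and $n{+}1$ parts, the identification of the $n$-block with the ${}^n\!B$ of Sec.~\ref{SUBSEC-RestRn+1OfScalar} under $D\to D_{\sss P}$, the reduction of the $n{+}1$-block to $\epsilon\,\partial_{n+1}^2\phi$ via the vanishing anholonomy coefficients, and the final repackaging through $D_{\sss P}=D-D_{n+1}$ together with $[F,D]=2F$ are exactly the steps carried out in appendix~\ref{APP-IntrinsicFormOfEqScal}. One minor slip: since $F=\epsilon H\,\partial_{n+1}$ has constant coefficients, $\partial_{n+1}^2=H^{-2}F^2$ exactly (no $\epsilon$ and no first-order correction); the genuine bookkeeping point, which the paper makes explicit, is rather the cancellation $m^*\bigl(-\tfrac{\epsilon}{|y^2_{\sss P}|}(y^{n+1})^2\partial_{n+1}^2\phi+\epsilon\,\partial_{n+1}^2\phi\bigr)=0$ on $\Sigma_n$.
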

\begin{remark}
For $\phi$ homogeneous of degree $r$, the above expression reduces to
\begin{equation*}
    (\square_{n+2}\phi)_{\sss \Sigma}= \square_{\sss \Sigma}\phi_{\sss \Sigma} -\epsilon H^2 r(r+n-1)\phi_{\sss \Sigma} + (2r + 
    n-2)F(\phi)_{\sss \Sigma}.
\end{equation*}
Note that for $n=4$ we retrieve our previous result (Eq.~(9) of \cite{Zapata:2017gqg}) as a particular case.
\end{remark}

\section{The additional terms}\label{SEC-AdditionnalTerms}

We observe that all restricted operators we obtained in previous sections 
Eqs.~(\ref{EQ-RestricFormRn+1IntrinsicVersion}, \ref{EQ-RestricScalarRn+1IntrinsicVersion},  
\ref{EQ-RestricFormRn+2IntrinsicVersion},
\ref{EQ-RestricScalarRn+2IntrinsicVersion}), share the common structure:
\begin{equation}\label{EQ-GenericEqWith-AT}
    (\square_d \alpha)_{\sss \Sigma}=\square_{\sss \Sigma}\alpha_{\sss \Sigma}+\mbox{AT}(\alpha), 
\end{equation}
where $\alpha$ is a one-form or a scalar field and AT stands for ``Additional Term".

A natural question is whether the Additional Term has some universal characteristic or can be anything. 
In order 
to address this question,
we are interested in the converse situation in which one considers as a starting point the expression
\begin{equation}\label{EQ-GenericTermWith-X}
 \square_{\sss \Sigma}\beta + \chi(\beta),   
\end{equation}
where $\beta$ and $\chi$ live on $\Sigma_n$ and are respectively a one-form or scalar field, and an arbitrary smooth operator. 
We will show that it is possible to choose an extension, to the embedding space, of $\beta$ such that the Additional Term 
corresponding to this extension is equal to $\chi$.
Such an extension of $\beta$ will be denoted $\varpi(\beta)$ where $\varpi$  is a right inverse (sometimes called a \emph{section}) of $m^*$: 
$m^*\circ\varpi = \mbox{Id}$. The answer to this question is given by the following theorem.
\begin{theorem}
Let $\beta$ be a 1-form or a scalar field on $\Sigma_n$ and $\chi$ a smooth operator, then there exists an extension $\varpi_\chi$ such that
\[
    (\square_{n+2} \varpi_\chi(\beta))_{\sss \Sigma}=\square_{\sss \Sigma}\beta+\chi(\beta).
\]
\end{theorem}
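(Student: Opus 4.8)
The strategy is to build the extension $\varpi_\chi(\beta)$ in two steps: first a canonical extension that produces some specific additional term, then a correction that adjusts the additional term to the prescribed $\chi$. For the first step, take any fixed right inverse $\varpi_0$ of $m^*$ (for instance, the one obtained by declaring $\varpi_0(\beta)$ to be homogeneous of degree $0$ in $y$ and independent of $y^{n+1}$, so that $D$ and $F$ annihilate its frame components); by Eq.~\eqref{EQ-RestricFormRn+2IntrinsicVersion} (resp. Eq.~\eqref{EQ-RestricScalarRn+2IntrinsicVersion}) we then get $(\square_{n+2}\varpi_0(\beta))_{\sss\Sigma}=\square_{\sss\Sigma}\beta+\mathrm{AT}_0(\beta)$ with $\mathrm{AT}_0$ a completely explicit linear differential expression in the ambient derivatives $\mathcal{L}_{\sss D},\mathcal{L}_{\sss F},d,i$. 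With the chosen $\varpi_0$ the derivative terms collapse and one is left with a purely algebraic $\mathrm{AT}_0(\beta)=c\,\beta$ for an explicit constant $c$ depending only on $n$, $\epsilon$, $H$ (and on whether $\beta$ is a scalar or a one-form).

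For the second step, I would look for the desired extension in the form $\varpi_\chi(\beta)=\varpi_0(\beta)+\psi$, where $\psi$ is an ambient field vanishing on $\Sigma_n$, i.e.\ $m^*\psi=0$, so that $\varpi_\chi$ is still a right inverse of $m^*$ and $\varpi_\chi(\beta)$ is genuinely an extension of $\beta$. By linearity of $\square_{n+2}$ and of the restriction,
\[
(\square_{n+2}\varpi_\chi(\beta))_{\sss\Sigma}=\square_{\sss\Sigma}\beta+c\,\beta+(\square_{n+2}\psi)_{\sss\Sigma},
\]
so it suffices to choose $\psi$ with $m^*\psi=0$ and $(\square_{n+2}\psi)_{\sss\Sigma}=\chi(\beta)-c\,\beta$. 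In other words, the whole problem reduces to the surjectivity statement: \emph{every} $\Sigma_n$-field in the image of $m^*$ arises as $(\square_{n+2}\psi)_{\sss\Sigma}$ for some ambient $\psi$ vanishing on $\Sigma_n$. This is where the freedom of the transverse (normal) directions is exploited: writing $\psi=g\cdot\varpi_0(\eta)$ where $\eta:=\chi(\beta)-c\,\beta$ is the target field on $\Sigma_n$, $\varpi_0(\eta)$ its canonical extension, and $g$ a smooth ambient function chosen to vanish on $\Sigma_n$, the second-order operator $\square_{n+2}$ acting on $g$ produces a term proportional to $\eta_{\sss\Sigma}$ with a coefficient built from the ambient second derivatives of $g$ normal to $\Sigma_n$; since $C(y)$ and $f(y)-1$ (the defining functions of $\Sigma_n$ inside $\setR^{n+2}$) have nowhere-vanishing gradients transverse to $\Sigma_n$, one can pick $g$ (a suitable smooth combination of $C$ and $f-1$, or their product) so that $g|_{\sss\Sigma}=0$ while the surviving normal-derivative contribution equals exactly $1$ after restriction. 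Feeding this back gives $(\square_{n+2}\psi)_{\sss\Sigma}=\eta_{\sss\Sigma}=\chi(\beta)-c\,\beta$, as required.

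The main obstacle is the second step: one must be careful that $\square_{n+2}$ on the product $g\cdot\varpi_0(\eta)$ does not leave stray terms after restriction. Expanding $\square_{n+2}(g\,\omega)$ by the Leibniz-type rule for the Laplace--de~Rham operator, the terms fall into three classes: (i) $g\,\square_{n+2}\omega$, which restricts to zero because $g|_{\sss\Sigma}=0$; (ii) cross terms linear in first derivatives of $g$, of the form (gradient of $g$) contracted with first derivatives of $\omega$ — these restrict to multiples of the normal component of $\omega$, which we can kill by choosing $\omega=\varpi_0(\eta)$ \emph{tangential} along $\Sigma_n$ in the sense that its frame components along the ambient normals vanish identically (our canonical extension already has this property); (iii) the term $(\square_{n+2}g)\,\omega$ plus the purely metric contraction of $\nabla g$ with itself times $\omega$, which restricts to (constant)$\cdot\eta_{\sss\Sigma}$ provided $\nabla g$ is normal to $\Sigma_n$ and non-null there — this is guaranteed because $F=\sharp df$ is non-isotropic by hypothesis and the null cone $\mathcal{C}$ meets $P$ transversally. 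Tracking the exact constant in (iii) and verifying it is nonzero (so that it can be normalized to $1$) is the one genuinely delicate computation; everything else is bookkeeping already set up by the earlier sections. I would present the scalar case first, where these three classes are transparent, and then note that the one-form case is identical, the Leibniz rule for $\square$ on $g\wedge(\cdot)$ producing the same structure with $d$ and $i$ bookkeeping that does not affect the restriction argument.
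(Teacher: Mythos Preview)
Your overall strategy---write $\varpi_\chi(\beta)=\varpi_0(\beta)+\psi$ with $m^*\psi=0$ and then manufacture $\psi=g\,\varpi_0(\eta)$ with $g$ vanishing on $\Sigma_n$---is sound and does lead to a proof, but your analysis of the Leibniz expansion contains errors. In flat $\setR^{n+2}$ one has, componentwise in Cartesian coordinates, $\square_{n+2}(g\omega)_\alpha=(\square_{n+2} g)\omega_\alpha+2\eta^{\beta\gamma}\partial_\beta g\,\partial_\gamma\omega_\alpha+g(\square_{n+2}\omega)_\alpha$; there is no separate ``$|\nabla g|^2\omega$'' contribution, and the cross term involves \emph{normal derivatives} of $\omega$, not normal \emph{components}, so tangentiality of $\omega$ does not kill it. For instance with $g=C(y)$ and $\omega$ of homogeneity degree zero the cross term restricts to $-4\,\eta$, not zero. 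The repair is easy: either absorb this into the constant you normalize away (with $g=C$ the total is $2n\,\eta$), or, better, choose $g$ with $dg|_{\sss\Sigma}=0$ as well, e.g.\ $g=(f-1)^2$ or $g=C(f-1)$, which genuinely kills the cross term and leaves $(\square_{n+2}g)|_{\sss\Sigma}$ a nonzero constant. A further slip: for a one-form, ``homogeneous of degree $0$'' means $\mathcal L_{\sss D}\omega=0$, hence its frame components have degree $-1$ and are \emph{not} annihilated by $D$; this is harmless for your argument but the parenthetical justification is wrong.

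The paper's proof is organized differently and avoids the Leibniz bookkeeping altogether. It first treats the $\setR^{n+1}$ problem by taking the correction to be a \emph{difference of homogeneous extensions} of $\chi$, namely $\chi^{(s)}-\chi^{(0)}$ with $s$ chosen so that the polynomial $S(X)=X^2+(n-3)X$ (resp.\ $Q(X)=X^2+(n-1)X$ in the scalar case) satisfies $S(s)=1$; since $\mathcal L_{\sss D}$ acts as multiplication by the degree on such extensions and $i_{\sss D}$ vanishes on them, the verification $\mathrm{AT}\circ\varpi_\chi=\chi$ collapses to this single algebraic identity. The $\setR^{n+2}$ statement is then obtained by composing with the pullback $p^*$ of the orthogonal projection $p:\setR^{n+2}\to P$, which annihilates all the $\mathcal L_{\sss F}$- and $i_{\sss F}$-terms in $\mathrm{AT}'$. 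Your approach is more hands-on and would adapt to settings without a convenient dilation, but here the paper's route is cleaner precisely because the additional term is already a polynomial in $\mathcal L_{\sss D}$ and $\mathcal L_{\sss F}$.
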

In other words, any additional term $\chi$ on the embedded pseudo-sphere $\Sigma_n$ can be obtained by the choice of an appropriate extension of the one-form or scalar field. One can see easily that proving this theorem is equivalent to finding an extension $\varpi_\chi$ such that ${\rm AT}\circ\varpi_\chi=\chi$. The following subsections are devoted to building explicitly these extensions.

Note that the same result holds in $\setR^{n+1}$ context and, for technical reasons, we begin with addressing this case.

\subsection{Additional terms in the $\setR^{n+1}$ context}
We recall that $l$ is the canonical injection of $\Sigma_n$ into $\setR^{n+1}$ and its pullback $l^*$ the corresponding restriction for 
scalar and one-forms.

\begin{definition}
Let  $y\in\setR^{n+1}$ be a point such that $\sgn(y^{\sss A}y_{\sss A})=-\epsilon$, let $\rho$ be a map over $\Sigma_n$, and let $r\in\setR$.
Then, $\rho$ is \emph{extended by homogeneity} from $\Sigma_n$ to $\setR^{n+1}$ by setting 
\begin{equation}\label{EQ-ExtensionHomogeneity-Def-rho(r)}
\rho^{(r)}(y) = \rho\left(\frac{y}{H\sqrt{|y_{\sss A}y^{\sss A}}|}\right)\left(H\sqrt{|y_{\sss A}y^{\sss A}|}\right)^r.
\end{equation}
\end{definition}
\begin{remark}
    The condition $\sgn(y^{\sss A}y_{\sss A})=-\epsilon$ ensures that the half-line emanating from the center of $\setR^{n+1}$ and containing $y$ intercepts $\Sigma_n$.
\end{remark}

Since $y/(H\sqrt{|y_{\sss A}y^{\sss A}|})\in\Sigma_n$, then $\rho^{(r)}$ is homogeneous of degree $r$ and $(\rho^{(r)})_{\sss \Sigma}=\rho$ by construction.
In regards to the one-forms, one can define the extension by
homogeneity of order $s$ for any one-form $h=h_\mu e^\mu$ on $\Sigma_n$ through 
\begin{equation*}
    h^{(s)}= h_\mu^{(s-1)}e^\mu.
\end{equation*}

Remark that, since $D$ is along the (outer) normal to $\Sigma_n$,  
any one-forms $k = k_\mu e^\mu$ on $E$ satisfy the transversality condition $i_{\sss D} k = 0$. This is the case for $h^{(s)}$ 
which is thus an extension of $h$, transverse and homogeneous of degree $s$.
From Eq.~(\ref{EQ-RestricFormRn+1IntrinsicVersion}) we see that
AT, in the present situation, involves the Lie derivative ${\mathcal L}_{\sss D}$. Let us determine its action on $h^{(s)}$, one 
has successively
\begin{align*}
    {\mathcal L}_{\sss D}(h^{(s)})
    &=i_{\sss D}dh_\mu^{(s-1)}e^\mu +d\ i_{\sss D}h^{(s)}\\
    &=D(h_\mu^{(s-1)})e^\mu +h_\mu^{(s-1)}i_{\sss D}de^\mu\\
    &=(s-1)h_\mu^{(s-1)}e^\mu-\frac{1}{2}h_\mu^{(s-1)}
    \sqrt{|y^2|}i_nc^\mu_{\sss AB}e^{\sss A}\wedge e^{\sss B} \\
    &=(s-1)h^{(s)}-\frac{1}{2}h_\mu^{(s-1)}
    \sqrt{|y^2|}c^\mu_{\sss AB}(\delta^{\sss A}_n e^{\sss B}
    -\delta^{\sss B}_n e^{\sss A})\\
    &=(s-1)h^{(s)}-h_\mu^{(s-1)}
    \sqrt{|y^2|}c^\mu_{n{\sss A}}e^{\sss A}\\
    &=s h^{(s)},
    \end{align*}
where transversality has been used to eliminate the second term on the r.h.s. in the first line.

\begin{proposition}\label{THM-section-Rn+1_form}
    Let $b := \beta$ in Eq.~\eqref{EQ-GenericTermWith-X} be a one-form field on $\Sigma_n$, then a section $\varpi_\chi$ such that $\mbox{AT}\circ\varpi_\chi(b)= \chi(b)$ is given by
    \begin{equation*}
        \varpi_\chi(b)=b^{(0)}-\epsilon H^{-2} (\chi^{(s)}-\chi^{(0)}),  
    \end{equation*}
    with $s$ a solution of $S(s)=1$ where $S(X)=X^2+(n-3)X$.
\end{proposition}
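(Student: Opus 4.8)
The plan is to exhibit an explicit section $\varpi_\chi$ and verify directly that $\mathrm{AT}\circ\varpi_\chi(b)=\chi(b)$, where $\mathrm{AT}$ is the additional-term operator read off from Eq.~\eqref{EQ-RestricFormRn+1IntrinsicVersion}, namely
\[
\mathrm{AT}(k) = -\epsilon H^2\bigl[\mathcal{L}^2_{\sss D}(k)+(n-3)\mathcal{L}_{\sss D}(k)+2\,d(i_{\sss D}k)\bigr]_{\sss \Sigma}.
\]
The starting observation is that the extension-by-homogeneity construction gives, for any one-form $h$ on $\Sigma_n$, a transverse extension $h^{(s)}$ on which $\mathcal{L}_{\sss D}$ acts as multiplication by $s$ (this is exactly the displayed six-line computation preceding the proposition). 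Consequently $i_{\sss D}h^{(s)}=0$, the $d(i_{\sss D}\,\cdot\,)$ term drops out, and $\mathrm{AT}(h^{(s)}) = -\epsilon H^2\,[\,s^2+(n-3)s\,]\,(h^{(s)})_{\sss \Sigma} = -\epsilon H^2\,S(s)\,h$, since $(h^{(s)})_{\sss \Sigma}=h$ by construction.

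First I would apply this to $b$ itself with the homogeneity degree $s$ chosen so that $S(s)=1$: then $\mathrm{AT}(b^{(s)}) = -\epsilon H^2\,b$, which is not yet $\chi(b)$ — in fact this first candidate has no way to produce $\chi$. The fix, which is the content of the stated formula, is to add a correction term built from $\chi(b)$, viewed as a one-form on $\Sigma_n$. The claim is that $\varpi_\chi(b) = b^{(0)} - \epsilon H^{-2}\bigl(\chi^{(s)}-\chi^{(0)}\bigr)$, where $\chi^{(s)}$ and $\chi^{(0)}$ abbreviate $(\chi(b))^{(s)}$ and $(\chi(b))^{(0)}$, does the job. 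One checks first that this is indeed a section: applying $l^*$ and using $(h^{(r)})_{\sss \Sigma}=h$ for each piece gives $\varpi_\chi(b)_{\sss \Sigma} = b - \epsilon H^{-2}(\chi(b)-\chi(b)) = b$, so $l^*\circ\varpi_\chi=\mathrm{Id}$. Then I would compute $\mathrm{AT}\circ\varpi_\chi(b)$ by linearity: $\mathrm{AT}(b^{(0)}) = -\epsilon H^2\,S(0)\,b = 0$ since $S(0)=0$; $\mathrm{AT}(\chi^{(s)}) = -\epsilon H^2\,S(s)\,\chi(b) = -\epsilon H^2\,\chi(b)$; and $\mathrm{AT}(\chi^{(0)}) = -\epsilon H^2\,S(0)\,\chi(b)=0$. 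Collecting, $\mathrm{AT}\circ\varpi_\chi(b) = 0 - \epsilon H^{-2}\bigl(-\epsilon H^2\,\chi(b) - 0\bigr) = \chi(b)$, as required.

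The main point to be careful about — and the only genuine obstacle — is the linearity of $\mathrm{AT}$ together with the fact that each summand in $\varpi_\chi(b)$ is a \emph{transverse, homogeneous} extension, so the eigenvalue computation $\mathcal{L}_{\sss D}h^{(r)}=r\,h^{(r)}$ applies verbatim to all three pieces and the $d\,i_{\sss D}$ term never contributes; both facts are already established in the text above the proposition. One should also note that a solution $s$ of $S(s)=1$, i.e. $s^2+(n-3)s-1=0$, always exists in $\setR$ because the discriminant $(n-3)^2+4>0$, so the construction is never vacuous; one simply picks either root. Nothing else requires work: the verification is a three-line linear computation once the eigenvalue lemma and the section property are in hand.
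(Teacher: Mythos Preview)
Your argument is correct and follows essentially the same route as the paper's proof: both use the transversality of the homogeneous extensions to discard the $d\,i_{\sss D}$ term, the eigenvalue relation $\mathcal{L}_{\sss D}h^{(r)}=r\,h^{(r)}$ to reduce $\mathrm{AT}$ to $-\epsilon H^2 S(r)$ on each summand, and then the choices $S(0)=0$, $S(s)=1$ together with linearity. Your additional remark that the discriminant $(n-3)^2+4>0$ guarantees a real solution $s$ is a useful observation not made explicit in the paper.
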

\begin{proof}
Note first, that applying $l^*$ to both members of the above relation shows
that $\varpi$ is a section of $l^*$. Then, starting from 
Eq.~(\ref{EQ-RestricFormRn+1IntrinsicVersion}),
we obtain: 
\begin{equation*}
{\rm AT} =- \epsilon H^2 l^*\circ
[ S({\cal L}_{\sss D}) + 2d(i_{\sss D})  ],
\end{equation*}
and
\begin{align*}
    \mbox{AT}(\varpi_\chi(b))&= -\epsilon H^2 l^*S({\cal L}_{\sss D})\varpi_\chi(b)\\
    &=-\epsilon H^2 l^*S({\cal L}_{\sss D})[b^{(0)}-\epsilon H^{-2}
(\chi^{(s)}(b)-\chi^{(0)}(b))]\\
    &= 0+l^*S({\cal L}_{\sss D})
    (\chi^{(s)}(b))\\
&=S(s)l^*(\chi^{(s)}(b))\\
&=\chi(b),
\end{align*}
where the transversality of $\chi^{(s)}(b)$ has been used in the first and the third line.
\end{proof}

The scalar case can be treated in a similar way and we can prove the following proposition in a similar manner.
\begin{proposition}\label{THM-section-Rn+1_scal}
Let $\psi:=\beta$ be a scalar field on $\Sigma_n$, and let $\varpi_\chi$ be defined through
\begin{equation*}
 \varpi_\chi(\psi)=\psi^{(0)}-\epsilon H^{-2}(\chi^{(s)}(\psi)-\chi^{(0)}(\psi)),  
\end{equation*}
where $s$ is a solution of the equation  $Q(s):=s^2+(n-1)s=1$.
Then $\varpi_\chi$ is a section of $l^*$, that is it fulfills $l^*\circ \varpi_\chi=\mbox{Id}$,
which satisfies ${\rm AT}\circ\varpi_\chi=\chi$.
\end{proposition}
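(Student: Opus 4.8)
The plan is to transcribe the proof of Proposition~\ref{THM-section-Rn+1_form} into the scalar setting, where it is in fact shorter because no transversality condition intervenes. First I would read off from Eq.~\eqref{EQ-RestricScalarRn+1IntrinsicVersion} the explicit shape of the additional term in the $\setR^{n+1}$ scalar case, namely
\[
\mathrm{AT} = -\epsilon H^2\, l^*\!\circ Q(D),\qquad Q(D) := D^2+(n-1)D,
\]
with $D=y^{\sss A}\partial_{\sss A}$ acting on scalars. The single elementary fact I would invoke repeatedly is Euler's relation: a function extended by homogeneity of degree $r$, $\phi^{(r)}$, satisfies $D(\phi^{(r)})=r\,\phi^{(r)}$, hence $Q(D)\phi^{(r)}=Q(r)\,\phi^{(r)}$; together with $(\rho^{(r)})_{\sss\Sigma}=\rho$, which holds by the very definition of extension by homogeneity.

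Then I would verify the two assertions. For the section property, applying $l^*$ to $\varpi_\chi(\psi)=\psi^{(0)}-\epsilon H^{-2}\bigl(\chi^{(s)}(\psi)-\chi^{(0)}(\psi)\bigr)$ and using $(\rho^{(r)})_{\sss\Sigma}=\rho$, the two $\chi$-terms both restrict to $\chi(\psi)$ and cancel, leaving $l^*\varpi_\chi(\psi)=\psi$, i.e. $l^*\circ\varpi_\chi=\mathrm{Id}$. For the additional term, linearity gives
\[
\mathrm{AT}(\varpi_\chi(\psi))=-\epsilon H^2\,l^*\!\left[Q(D)\psi^{(0)}-\epsilon H^{-2}\bigl(Q(D)\chi^{(s)}(\psi)-Q(D)\chi^{(0)}(\psi)\bigr)\right].
\]
Since $\psi^{(0)}$ and $\chi^{(0)}(\psi)$ are homogeneous of degree $0$ and $Q(0)=0$, their contributions vanish; since $\chi^{(s)}(\psi)$ is homogeneous of degree $s$ with $Q(s)=1$, one has $Q(D)\chi^{(s)}(\psi)=\chi^{(s)}(\psi)$. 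The scalar factors $-\epsilon H^2$ and $-\epsilon H^{-2}$ multiply to $1$, so $\mathrm{AT}(\varpi_\chi(\psi))=l^*\chi^{(s)}(\psi)=\chi(\psi)$, again by $(\rho^{(r)})_{\sss\Sigma}=\rho$. This is precisely the claim $\mathrm{AT}\circ\varpi_\chi=\chi$.

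The only point deserving a remark is the existence of the exponent $s$: the equation $Q(s)=s^2+(n-1)s-1=0$ has discriminant $(n-1)^2+4>0$, hence two real roots for every $n$, either of which may be taken. I do not expect a genuine obstacle here — the argument is a faithful, and slightly leaner, copy of the one-form case, the absence of a transversality constraint eliminating the only delicate bookkeeping present there.
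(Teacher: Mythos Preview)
Your proposal is correct and follows precisely the route the paper indicates: it is the scalar transcription of the proof of Proposition~\ref{THM-section-Rn+1_form}, with $Q(D)=D^2+(n-1)D$ playing the role of $S(\mathcal{L}_{\sss D})$ and the transversality step simply absent since there is no $i_{\sss D}$-term for scalars. The paper itself does not spell out this proof, only stating that it proceeds ``in a similar manner''; your write-up is exactly the expected filling-in, including the check that $Q(s)=1$ has real solutions.
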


\subsection{Additional terms in the $\setR^{n+2}$ context}
As in the previous section, we have to find $\varpi'_\chi$ a section of $m^*$ such that ${\rm AT}'\circ\varpi'_\chi=\chi$ where  
${\rm AT}'$ is the additional term in 
$\setR^{n+2}$ context, see Eq.~\eqref{EQ-RestricFormRn+2IntrinsicVersion}, namely
\begin{equation*}
{\rm AT}'= m^*\left[ -\epsilon H^2 (
S(\mathcal{L}_{\sss D})+ 2 d i_{\sss D}) 
+ 2\mathcal{L}_{\sss D}\mathcal{L}_{\sss F} 
+ n \mathcal{L}_{\sss F} + 2 d i_{\sss F}\right].
\end{equation*}
Let $p$ be the orthogonal projection of $\setR^{n+2}$ onto $P\sim \setR^{n+1}$ (remember that the normal vector of $P$ is non-isotropic), one can verify that $l=p\circ m$
and, as a consequence, $p^*\circ\varpi_\chi$ is a section of $m^*$.

\begin{proposition}
    Let $p$ be the orthogonal projection of $\setR^{n+2}$ onto $P\sim\setR^{n+1}$, as above, let $\varpi_\chi$ be a section in $\setR^{n+1}$, as in Proposition~\ref{THM-section-Rn+1_form} or \ref{THM-section-Rn+1_scal}.
    Then $\varpi'_\chi:=p^*\circ\varpi_\chi$ fulfills ${\rm AT}'\circ\varpi'_\chi=\chi$.
\end{proposition}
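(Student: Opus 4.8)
The plan is to reduce the claim to the already-established $\setR^{n+1}$ result of Proposition~\ref{THM-section-Rn+1_form} (resp.~\ref{THM-section-Rn+1_scal}), using the factorization $l=p\circ m$ together with the naturality of $d$, $i_{X}$ and $\mathcal{L}_{X}$ under the projection $p$. First I would record that $\varpi'_\chi$ is genuinely a section of $m^*$: since pullback is contravariant, $m^*\circ p^*=(p\circ m)^*=l^*$, hence $m^*\circ\varpi'_\chi=l^*\circ\varpi_\chi=\mathrm{Id}$, the last step being precisely the section property proved in the $\setR^{n+1}$ context.

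The central observation is that $\varpi'_\chi(\beta)=p^*\varpi_\chi(\beta)$ lies in the image of $p^*$, so it is annihilated by every operator occurring in ${\rm AT}'$ that refers to the ``vertical'' direction. In the coordinates of Eq.~\eqref{EQ-f(y)} one has $dp(\partial_{n+1})=0$, so both $F=\epsilon H\partial_{n+1}$ and $D_{n+1}=y^{n+1}\partial_{n+1}$ are $p$-related to the zero vector field on $P\simeq\setR^{n+1}$, whereas the horizontal dilation $D_{\sss P}=y^{\sss A}\partial_{\sss A}$ of $\setR^{n+2}$ is $p$-related to the dilation $D$ of $\setR^{n+1}$ used in Section~\ref{SEC-Rn+1}. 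By naturality it follows that $\mathcal{L}_{\sss F}$, $i_{\sss F}$, $\mathcal{L}_{D_{n+1}}$ and $i_{D_{n+1}}$ all kill anything of the form $p^*\omega$, that $\mathcal{L}_{\sss D_{\sss P}}p^*\omega=p^*\mathcal{L}_{\sss D}\omega$, $i_{\sss D_{\sss P}}p^*\omega=p^*i_{\sss D}\omega$ and $d\,p^*\omega=p^*d\omega$, and hence that the image of $p^*$ is stable under $d$, $i_{\sss D_{\sss P}}$ and $\mathcal{L}_{\sss D_{\sss P}}$. Since $D=D_{\sss P}+D_{n+1}$, on $p^*\varpi_\chi(\beta)$ we may replace $\mathcal{L}_{\sss D}$ by $\mathcal{L}_{\sss D_{\sss P}}$ and $i_{\sss D}$ by $i_{\sss D_{\sss P}}$, while the terms $2\mathcal{L}_{\sss D}\mathcal{L}_{\sss F}$, $n\mathcal{L}_{\sss F}$ and $2d\,i_{\sss F}$ of the displayed expression for ${\rm AT}'$ drop out entirely, the vanishing propagating through the polynomial $S$ precisely because the image of $p^*$ is preserved by the surviving ``horizontal'' operators.

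Putting these facts together with $m^*\circ p^*=l^*$ yields the chain
\begin{equation*}
{\rm AT}'(p^*\varpi_\chi(\beta))=m^*[-\epsilon H^2(S(\mathcal{L}_{\sss D_{\sss P}})+2\,d\,i_{\sss D_{\sss P}})]\,p^*\varpi_\chi(\beta)=l^*[-\epsilon H^2(S(\mathcal{L}_{\sss D})+2\,d\,i_{\sss D})]\varpi_\chi(\beta)={\rm AT}(\varpi_\chi(\beta))=\chi(\beta),
\end{equation*}
the last equality being Proposition~\ref{THM-section-Rn+1_form}. The scalar case is handled identically, with $S$ replaced by $Q(X)=X^2+(n-1)X$, without the interior-product terms, and invoking Proposition~\ref{THM-section-Rn+1_scal} in place of Proposition~\ref{THM-section-Rn+1_form}.

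I do not expect a genuine obstacle here; the only point that deserves care is the bookkeeping of which vector fields are $p$-related to which, together with the remark that, since the image of $p^*$ is stable under the horizontal operators, the terms of ${\rm AT}'$ built from $\mathcal{L}_{\sss F}$ and $i_{\sss F}$ really disappear even after $S$ (or $Q$) has acted. Anyone preferring to avoid the language of $p$-relatedness can verify exactly these facts by hand in the coordinates of Eq.~\eqref{EQ-f(y)}, where $p^*\varpi_\chi(\beta)$ has coefficients independent of $y^{n+1}$ and carries no $dy^{n+1}$ component.
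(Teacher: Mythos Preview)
Your proof is correct and follows essentially the same approach as the paper: both exploit that $\mathcal{L}_{\sss F}$ and $i_{\sss F}$ annihilate the image of $p^*$, so that ${\rm AT}'$ collapses to the $\setR^{n+1}$ additional term, and then invoke the $\setR^{n+1}$ result. The only minor difference is organizational: the paper uses directly that $p^*\varpi_\chi(\beta)$ is a sum of $D$-homogeneous, $D$-transverse pieces on $\setR^{n+2}$ and evaluates $S(\mathcal{L}_{\sss D})$ degree by degree, whereas you phrase the same reduction via $p$-relatedness ($\mathcal{L}_{\sss D_{\sss P}}\circ p^*=p^*\circ\mathcal{L}_{\sss D}$, etc.) and then appeal to Proposition~\ref{THM-section-Rn+1_form} as a black box; your remark that the image of $p^*$ is stable under the horizontal operators is exactly what is needed to justify replacing $\mathcal{L}_{\sss D}$ by $\mathcal{L}_{\sss D_{\sss P}}$ inside the polynomial $S$.
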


Before proving that, remarks regarding $p^*$ are in order. Let $h=h_{\sss A}e^{\sss A}$ be a one-form on $P\sim \setR^{n+1}$, then we have
\begin{equation*}
(p^*h)(y^0,\ldots,y^n,y^{n+1})= h_{\sss A}(y^0,\ldots,y^n)e^{\sss A}.
\end{equation*}
This proves several facts. First ${\mathcal L}_{\sss F}\circ p^*=i_{\sss F}\circ p^*=0$, second $p^*h$ is homogeneous of degree $s$ as soon as $h$ is, and finally $i_{\sss D}p^*h=0$ as soon as $i_{\sss D}h=0$.
\begin{proof}
    Taking into account the properties of $p^*$, in the case of the one-form field we have
\begin{align*}
{\rm AT}'\circ \varpi'_\chi&=m^*\left[ -\epsilon H^2 (
S(\mathcal{L}_{\sss D})+ 2 d i_{\sss D}) 
+ 2\mathcal{L}_{\sss D}\mathcal{L}_{\sss F} 
+ n \mathcal{L}_{\sss F} + 2 d i_{\sss F}\right]\circ p^*\circ\varpi_\chi\\
&=-\epsilon H^2m^*S(\mathcal{L}_{\sss D})\circ p^*\circ\varpi_\chi\\
&=-\epsilon H^2m^*S(s)p^*\circ\varpi_\chi\\
&=-\epsilon H^2l^*\circ\varpi_\chi\\
&=\chi.
\end{align*}
The scalar case can be treated in the same way: replacing $\varpi_\chi$ with $p^*\circ\varpi_\chi$  leads to the same result.
\end{proof}

We  again see that any operator $\Box_\Sigma b + \chi(b)$ on $\Sigma_n$ can be reached as a restriction  of the operator 
$\Box_{n+2}a$ where $a$ is an appropriate extension of $b$.

\section{Conclusion}\label{SEC-Conclusion}
The keystone of our construction is the existence of an adapted frame for which the commutation relations, leading to 
anholonomic coefficients, are easily calculated. This adapted frame is deeply related to the pseudo-sphere and the 
maximal symmetry of (A)dS. 
Other space-times are no longer maximally symmetric. Nevertheless, many of them share the same conformal group SO$(2,n)$, 
including Friedmann-Lemaître-Robertson-Walker spaces. 
This should allow us, in a future work, to generalize to these spaces the results obtained here.
\vfill

\appendix
\section{Details on calculations}
\subsection{Overview of the main notions of differential geometry used in this work}\label{APP-FUE}

We gather and comment here some definitions and formulas from differential geometry frequently used in this paper (the notations are those of \cite{Fecko:2006}).
We also relate, using our notations, the Laplace-de Rham to the Laplace-Beltrami operators through the 
Weitzenböck formula.

We consider an oriented pseudo-Riemanian  $d$-dimensional manifold $(M,g)$ and the set $\Lambda(M)=\oplus_p\Lambda^p(M)$ of differential forms on $M$. 
We note $\langle \alpha,v\rangle$ the natural pairing between a 1-form $\alpha\in\Lambda^1(M)$ and a vector $v\in T(M)$. We let, as usual, $\sharp\alpha$ be the vector such that $g(\sharp\alpha,v)=\langle\alpha,v\rangle$ and
$\flat v$ be the 1-form such that $\langle\flat v,u\rangle=g(v,u)$ for any vector $u$. We often use the notation ~$\widetilde{ }$~  instead of both $\sharp$ and $\flat$. We note also $\hat\eta$ the linear operator of $\Lambda(M)$ defined through: $\hat\eta\alpha=(-1)^p\alpha$ {\it iff} $\alpha\in \Lambda^p(M)$. 

Beside the well-known interior product $i_v$ we define the corresponding creator $j_v$  through $j_v\alpha=\tilde v\wedge \alpha$ where $v\in T(M)$.
The basic relationship between them is
\[ i_uj_v+j_vi_u= g(u,v).\] 
The interior product relates to the Lie derivatives through the Cartan formula:
\[{\mathcal L}_{v}=i_{v}d+di_{v}.\]
In this paper we make intensive use of positively oriented orthonormal frames $e_a$ verifying $g(e_a, e_b)=\eta_{ab}$ and their dual basis $e^a$. In such a frame, the volume form reads $\omega=e^1\wedge\cdots\wedge e^d$. We note $i^a=\eta^{ab}i_{e_b}$ and $j^a=\eta^{ab}j_{e_b}$ and obtain the very useful algebra:
\begin{align*}
    &\eta^{ab}= i^aj^b  + j^bi^a,\\
    &i^ai^b=-i^bi^a,\\
    &j^aj^b=-j^bj^a.
\end{align*}

In order to introduce the Hodge operator, we begin with the inner product of $1$-forms $(\alpha,\beta)_g=g(\tilde\alpha,\tilde\beta)$ that we extend to the Gram inner product of two $p$-forms $\alpha_1\wedge\cdots\wedge\alpha_p$ and $\beta_1\wedge\cdots\wedge\beta_p$, where $\alpha_i$ and $\beta_i$ are $1$-forms, through
\[(\alpha_1\wedge\cdots\wedge\alpha_p,\beta_1\wedge\cdots\wedge\beta_p)_g= \det M, 
\]
where $M$ is the $p\times p$ matrix whose entries are $M_{ij} = (\alpha_i,\beta_j)_g$.

We can now introduce the Hodge operator $\alpha\mapsto\ast\alpha,\ \Lambda^p(M)\to\Lambda^{d-p}(M)$ such that, for any $p$-form $\beta$,
\begin{equation*}
    \beta\wedge\ast\alpha=(\beta,\alpha)_g\omega.
\end{equation*}
This operator fulfills the elementary properties
\begin{align*}
    &\ast^{-1} =\sgn(g)  \ast  \hat\eta^{d+1},\\ 
    &\ast \omega= \sgn(g),\\ 
    &\ast 1 = \omega,
\end{align*}
and interacts with the $i$ and $j$ operators in the following way
\begin{align*}
    &\ast i_v= -j_v\ast \hat\eta,\\ 
    &\ast j_v=i_v\ast\hat\eta,
\end{align*}
from which we obtain the useful formulas
\begin{align*}
    &\ast(e^a\wedge\cdots\wedge e^b)= i^b\ldots i^a \omega,\\
    &\ast i^a\ldots i^b\omega=\sgn(g)\hat\eta^{d+1}e^b\wedge\cdots\wedge e^a.
\end{align*}

Moreover, the $\ast$-operator allows us to define the co-differential operator $\delta$ on $\Lambda(M)$  through
\[ \delta =\ast^{-1} d\ast \hat\eta,\]
which, in the case of 1-forms,
relates to the usual divergence through
$\delta\alpha=-\mbox{div}(\tilde\alpha)$.
We can now define the Laplace-de~Rham operator $\Box$, which we use intensively in this paper, through
\[\Box=-(d\delta+\delta d).\]

This operator is related to the usual Laplace-Beltrami operator.
Indeed, if $\nabla$ is the Levi-Civita connection associated to $g$ then the second covariant derivative 
$\nabla^2_{uv}=\nabla_u\nabla_v-\nabla_{\nabla_uv}$, where $u$ and $v$ are tangent vectors,  brings about the Laplace-Beltrami operator through its trace, namely:
\[ \Delta = \mbox{tr}\nabla^2=g^{ab}(\nabla_a\nabla_b-\nabla_{\nabla_ae_b}).\]
The link between the Laplace-de Rham $\square$ and the Laplace-Beltrami $\Delta$ operators on $p$-forms is given by
 the Weitzenböck formula, which we recall here with a short proof written in our notations. 
\begin{theorem}[Weitzenböck formula]
Let $\alpha$ a $p$-form and $e_a$ any frame, then 
\begin{equation}\label{EQ-Weitzenbock}
    \square\alpha=\Delta\alpha
    +j^ai^bR(e_a,e_b)\alpha,
\end{equation}
where $R(u,v)=\nabla_u\nabla_v-\nabla_v\nabla_u-\nabla_{[u,v]}$ is the curvature operator and $\nabla$ the Levi-Civita  connection.
\end{theorem}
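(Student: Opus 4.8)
The plan is to establish \eqref{EQ-Weitzenbock} pointwise. Both sides are tensorial and frame-independent --- the left-hand side manifestly, and the right-hand side because $\Delta=\mbox{tr}\,\nabla^2$ and $j^ai^bR(e_a,e_b)$ is a full $\eta$-contraction --- so it suffices to verify the identity at an arbitrary point $p\in M$, where I am free to choose a convenient frame. I would take a positively oriented orthonormal \emph{geodesic frame} at $p$: parallel transport an orthonormal basis of $T_pM$ along the geodesics issued from $p$, so that $\nabla_{e_a}e_b|_p=0$, hence also $\nabla_{e_a}e^b|_p=0$, for all $a,b$. From now on all identities are read at $p$.

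Next I would rewrite $d$ and $\delta$ through the Levi-Civita connection and the operators $i^a=\eta^{ab}i_{e_b}$, $j^a=\eta^{ab}j_{e_b}$ (note $j^a\beta=e^a\wedge\beta$). Torsion-freeness gives the classical formula $d=j^a\nabla_{e_a}$, and the definition $\delta=\ast^{-1}d\ast\hat\eta$ together with the $\ast$-algebra recalled above yields $\delta=-i^a\nabla_{e_a}$. Composing and using that at $p$ the connection commutes with $i^b$, $j^b$ and with $e^b\wedge\,\cdot\,$, I obtain $d\delta\alpha=-j^ai^b\,\nabla^2_{e_ae_b}\alpha$ and $\delta d\alpha=-i^aj^b\,\nabla^2_{e_ae_b}\alpha$, where $\nabla^2_{e_ae_b}=\nabla_{e_a}\nabla_{e_b}-\nabla_{\nabla_{e_a}e_b}$ and the correction term vanishes at $p$. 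Hence $\square\alpha=-(d\delta+\delta d)\alpha=(j^ai^b+i^aj^b)\,\nabla^2_{e_ae_b}\alpha$.

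The final step is purely algebraic. I would split $\nabla^2_{e_ae_b}$ into the parts symmetric and antisymmetric in $(a,b)$; torsion-freeness identifies the antisymmetric part with $\frac{1}{2}R(e_a,e_b)$, and contracting the symmetric part against $\eta^{ab}$ produces $\mbox{tr}\,\nabla^2=\Delta$. Pairing $j^ai^b+i^aj^b$ with the symmetric part and invoking $\eta^{ab}=i^aj^b+j^bi^a$ gives precisely $\Delta\alpha$; pairing it with the antisymmetric part and using in addition $i^ai^b=-i^bi^a$ and $j^aj^b=-j^bj^a$ collapses $j^ai^b+i^aj^b$ to $j^ai^b-j^bi^a$, whose contraction with the antisymmetric $\frac{1}{2}R(e_a,e_b)$ is $j^ai^bR(e_a,e_b)\alpha$. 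Summing the two contributions gives \eqref{EQ-Weitzenbock} at $p$, and therefore everywhere and for any frame. I expect the one genuinely delicate point to be the non-commutativity of $\nabla$ with the frame-dependent operators $i^a$, $j^a$ and $e^a\wedge\,\cdot\,$, which is exactly the reason for localizing at $p$ in a geodesic frame; once that is done, the remaining difficulty is only bookkeeping of signs, in particular the minus sign in $\delta=-i^a\nabla_{e_a}$.
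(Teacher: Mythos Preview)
Your proposal is correct and follows essentially the same route as the paper's proof: localize at a point in a geodesic (normal) orthonormal frame so that the Christoffel symbols vanish there, use $d=j^a\nabla_{e_a}$ and $\delta=-i^a\nabla_{e_a}$, and then reduce everything with the $i^a,j^a$ algebra. The only cosmetic difference is that the paper applies the anticommutation relation $i^bj^a=\eta^{ab}-j^ai^b$ directly inside the computation of $\delta d\alpha$ to obtain $\square\alpha=\eta^{ab}\nabla_a\nabla_b\alpha+j^ai^b(\nabla_a\nabla_b-\nabla_b\nabla_a)\alpha$ in one stroke, whereas you first write $\square\alpha=(j^ai^b+i^aj^b)\nabla^2_{e_ae_b}\alpha$ and then split $\nabla^2$ into its symmetric and antisymmetric parts; both lead to the same identification of $\Delta$ and $j^ai^bR(e_a,e_b)$.
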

\begin{proof}
  We first recall the link between the operators $d$, $\delta$ and the Levi-Civita connection $\nabla$:
\begin{align*}
    &\delta=-i^a\nabla_a,\\ 
    &d=j^a\nabla_a,
\end{align*}
in any frame $e_a$. 
We now choose a point $p$ of the manifold and a normal (geodesic) frame centered at $p$. As a consequence, in this frame, $\Gamma_{ab}^c(p)=0$ and $g_{ab}(p)=\eta_{ab}$. We can now work at this point  and in this frame and obtain 
\begin{align*}
    d\delta\alpha&=-j^a\nabla_ai^b\nabla_b\alpha\\
    &=-j^ai^b\nabla_a\nabla_b\alpha,
\end{align*}
and
\begin{align*}
    \delta d\alpha&= -i^b\nabla_bj^a\nabla_a\alpha\\
    &= -i^bj^a\nabla_b\nabla_a\alpha\\
     &= +j^ai^b\nabla_b\nabla_a\alpha-\eta^{ab}\nabla_b\nabla_a\alpha.
\end{align*}
This proves that, in this frame at the point $p$,
\[\square\alpha= \eta^{ab}\nabla_b\nabla_a\alpha+j^ai^b(\nabla_a\nabla_b
-\nabla_b\nabla_a)\alpha.
\]
Moreover, in this frame, $\Delta=\eta^{ab}\nabla_a\nabla_b$ and $R(e_a,e_b)=(\nabla_a\nabla_b
-\nabla_b\nabla_a)$, and we obtain that the formula \ref{EQ-Weitzenbock} is correct in this frame and at this point. However $R$ is tensorial in each entries and then the formula is independent of the frame.
\end{proof}
This formula takes a simpler form in two special cases.

\noindent{\bf Scalar case ($p=0$).}
Since $R$ vanishes when applied on scalar field we get 
\begin{equation}
    \square\alpha=\Delta\alpha.
\end{equation}

\noindent{\bf 1-form case ($p=1$).}
Since $[\nabla,\sharp]=0$ we have
\begin{align*}
    j^ai^bR(e_a,e_b)\alpha
    &=j^ai^b\sharp\sharp R(e_a,e_b)\alpha_c e^c\\
     &=j^ai^b\sharp R(e_a,e_b)\sharp \alpha_c e^c\\
      &=j^ai^b\alpha^c\sharp R(e_a,e_b)  e_c\\
      &=j^ai^b\alpha^c\sharp e_dR^d_{\ cab}\\
       &=-\alpha^cj^ai^b\eta_{de} e^eR^d_{\ cba}\\
        &=-\alpha^cj^a\eta^{be}\eta_{de} R^d_{\  cba}\\
        &=-\alpha^cj^a\delta^b_d R^d_{\ cba}\\
        &=-\alpha^cj^a R^b_{\ cba}\\
         &=-\alpha^cj^a r_{ca}\\
         &=-j^a r(\sharp\alpha,e_a),
\end{align*}
with $r$ the Ricci tensor.
In this case the Weitzenböck formula becomes
\begin{equation*}
    \square\alpha=\Delta\alpha
    -j^ar(\sharp\alpha,e_a),
    \end{equation*}
in other words
\begin{equation*}
    \square\alpha(u)=\Delta\alpha (u)
    -r(\sharp\alpha,u).
\end{equation*}

\subsection{Calculation of $\delta(e^a)$}\label{APP-CalcDelta(e^a)}
\begin{align*}
    \delta(e^a)&=-\sgn(g)\ast d\ast e^a\\
    &=-\sgn(g)\ast d i^a\omega\\
    &=-\sgn(g)\eta^{aa}\ast d (-1)^{a-1}   e^1\wedge\cdots\wedge\overset{\lor}{e^a}\wedge\cdots\wedge e^d\\
    &= \sgn(g)(-1)^a\eta^{aa}\ast J,
\end{align*}
with $J =d( e^1\wedge\cdots\wedge\overset{\lor}{e^a}\wedge\cdots\wedge e^d)$.
Thus, using exceptionally the symbol $\sum$ to indicate summations,  one has successively

\begin{align*}
    J &=\sum_{p < a} (-1)^{p-1} e^1\wedge \cdots\wedge de^p\wedge\cdots\wedge\overset{\lor}{e^a}\wedge\cdots\wedge e^d\\
    &\;\;\; +\sum_{p> a} (-1)^{p} e^1\wedge \cdots\wedge\overset{\lor}{e^a}\wedge\cdots\wedge de^p\wedge\cdots \wedge e^d\\
    &=\sum_{p < a} (-1)^{p} e^1\wedge \cdots\wedge (\sum_{b<c}c^p_{bc}e^b\wedge e^c)\wedge\cdots\wedge\overset{\lor}{e^a}\wedge\cdots\wedge e^d\\
    &\;\;\; + \sum_{p> a} (-1)^{p+1} e^1\wedge \cdots\wedge\overset{\lor}{e^a}\wedge\cdots\wedge (\sum_{b<c}c^p_{bc}e^b\wedge e^c)\wedge\cdots\wedge e^d\\
    &=\sum_{p < a} (-1)^{p} e^1\wedge \cdots\wedge (c^p_{pa}e^p\wedge e^a)\wedge\cdots\wedge\overset{\lor}{e^a}\wedge\cdots\wedge e^d\\
    &\;\;\; +\sum_{p> a} (-1)^{p+1} e^1\wedge \cdots\wedge\overset{\lor}{e^a}\wedge\cdots\wedge (c^p_{ap}e^a\wedge e^p)\wedge\cdots\wedge e^d\\
    &=\sum_{p < a} (-1)^{p} (-1)^{p-a+1}c^p_{pa}\omega+\sum_{p> a} (-1)^{p+1}(-1)^{a+1-p} c^p_{ap}\omega\\
    &= (-1)^{a+1}\omega\sum_{p\neq a} c^p_{pa}.
\end{align*}
Finally, returning to the Einstein summation convention, one gets
\begin{equation*}\label{EQ-Delta(e^a)}
    \delta(e^a)= -\eta^{ab}c^p_{pb}.
\end{equation*}
\vfill

\subsection{Derivation of Eq.~(\ref{EQ-ScalarGeneralCoeffStruct})}\label{APP-CalcEqBoxPhiRepOrth}
The expression Eq.~(\ref{EQ-ScalarGeneralCoeffStruct}) for the scalar field is obtained through the following steps:
\begin{align*}
\square \phi
&=-\delta d\phi\\
&=-\delta e_a(\phi) e^a\\
&=-\ast^{-1} d\ast \hat\eta\  e_a(\phi) e^a\\
&=-\sgn(g) \ast \hat\eta^{d+1} d \ast (-1)e_a(\phi)  e^a\\
&=\sgn(g)  \ast  (-1)^{d(d+1)} d e_a(\phi) i^a\omega\\
&=\sgn(g)  \ast  ( d (e_a(\phi))\wedge i^a\omega  + e_a(\phi)d i^a\omega)  \\
&=\sgn(g)  \ast  \left( e_b (e_a(\phi)) e^b\wedge i^a\omega  +e_a(\phi) {\cal L}_{\widetilde {e^a}} (\omega) \right) \\
&=\sgn(g)  \ast  \left( e_b (e_a(\phi)) j^b i^a\omega  +e_a(\phi) {\cal L}_{\widetilde {e^a}} (\omega) \right) \\
&=\sgn(g)  \ast  \left(\eta^{ab} e_b (e_a(\phi))\omega - e_a(\phi) \delta (e^a) \omega \right) \\
&= \eta^{ab} (e_b (e_a(\phi))  - e_a(\phi) \delta (e^a)\\
&= \eta^{ab}[e_ae_b(\phi)+e_a(\phi) c^p_{pb}],
\end{align*}
where the expression of the term $\delta(e^a)$, derived in appendix \ref{APP-CalcDelta(e^a)}, has been used in the last line.

We note for future reference that $\delta \alpha$, $\alpha$ being a one-form, is obtained by the replacement of 
$e_a(\phi)$ by $\alpha_a$ in the above calculation leading to
\begin{equation}\label{EQ-DeltaOneForm}
 \delta(\alpha) = -\eta^{ab}e_a(\alpha_b) + \alpha_a\delta(e^a).
\end{equation}

\subsection{Derivation of Eq.~(\ref{EQ-FormGeneralCoeffStruct})}\label{APP-CalcEqBoxaRepOrth}
In order to express $\square a$ in an ortho-normal basis. 
We first compute the Maxwell operator:
\begin{equation*}
    -\delta d(a_ae^a)
    = \underset{I}{\underbrace{-\delta d(a_a)\wedge e^a}}
    + \underset{II}{\underbrace{ -\delta a_cde^c}}.
\end{equation*}
Calculation of $I$:
\begin{align*}
I&=-\delta e_b(a_a)e^b\wedge e^a\\
&= (-1)^d\sgn(g) \ast d \ast e_b(a_a)e^b\wedge e^a\\
&= (-1)^d \sgn(g)\ast d  e_b(a_a)i^ai^b\omega\\
 &= (-1)^d\sgn(g)\left( \underset{I_1}{\underbrace{\ast d  (e_b(a_a))i^ai^b\omega}}+ \underset{I_2}{\underbrace{\ast e_b(a_a)di^ai^b\omega}} \right).
\end{align*}
For the term $I_1$ one has successively:
\begin{align*}
I_1&= \ast d  (e_b(a_a))i^ai^b\omega\\
&= \ast e_c( (e_b(a_a)))j^ci^ai^b\omega\\
&= \ast e_c( (e_b(a_a)))(\eta^{ca}-i^aj^c)i^b\omega\\
&= e_a( (e_b(a^a)))\ast i^b\omega -e_c( (e_b(a_a)))\ast i^a   \eta^{cb}\omega\\
&=\sgn(g) (-1)^{d+1}\left(e_a( (e_b(a^a)))e^b-\eta^{cb}e_c( (e_b(a_a)))e^a\right)\\
&=\sgn(g) (-1)^{d+1}(e_ae_b(a^a)e^b-\eta^{cb}e_ce_b(a_a)e^a).
\end{align*}
For the term $I_2$ one has successively:
\begin{align*}
I_2&=  \ast e_b(a_a)di^ai^b\omega\\
&= \ast e_b(a_a)({\cal L}_{\tilde{e^a} } - i^ad) i^b\omega\\
&=  \ast e_b(a_a) \left[{\cal L}_{\tilde{e^a} }  ( \tilde{e^b}\rfloor \omega)- i^adi^b\omega\right]\\
&=  \ast e_b(a_a) \left[
    {\cal L}_{\tilde{e^a} }  ( \tilde{e^b})\rfloor \omega
    + \tilde{e^b}\rfloor {\cal L}_{\tilde{e^a} } (\omega)
    - i^a {\cal L}_{\tilde{e^b} } \omega
    \right]\\
&= \ast e_b(a_a) \left[
     [\tilde{e^a},\tilde{e^b}]\rfloor \omega 
    -\delta(e^a) i^b \omega
    + \delta(e^b) i^a \omega\right]\\
&= \ast e_b(a_a) \left[
    \eta^{ac}\eta^{bm}c^l_{cm}\eta_{lf} i^f\omega
    - \delta(e^a) i^b \omega
    + \delta(e^b) i^a \omega\right]\\
&=   (-1)^{d+1} \sgn(g)e_b(a_a) \left[\delta(e^b)e^a+ \eta^{ac}\eta^{bm} \eta_{lf} c^l_{cm} e^f-\delta(e^a) e^b\right]\\
&= (-1)^{d+1} \sgn(g)e_b(a_a) \left[
        -\eta^{bd}c^p_{pd}e^a
        + \eta^{ac}\eta^{bm} \eta_{lf} c^l_{cm} e^f
        + \eta^{ad}c^p_{pd}e^b
    \right] \\
    &= (-1)^{d+1} \sgn(g) \left[
        - \eta^{bd}c^p_{pd} e_b(a_a)
        + \eta^{bm} \eta_{na} c^n_{cm} e_b(a^c)
        + c^p_{pb} e_a(a^b)
    \right] e^a,
\end{align*}
where Eq.~(\ref{EQ-Delta(e^a)}) for $\delta(e^a)$ as been used.

We now compute $II$. We first remark that $II$ rewrites 
\begin{align*}
II&=-\delta a_c de^c\\
&=- \delta a_c(-\frac{1}{2} )c^c_{ab} e^a\wedge e^b\\
&=- \delta \frac{1}{2} a_cc^c_{ab} e^b\wedge e^a,
\end{align*}
which is nothing but $I$ with $e_b(a_a)$ in place of $\frac{1}{2} a_cc^c_{ab}$. Thus, one can recast $II$ under the form $II=(-1)^d\sgn(g)(II_1+II_2)$, 
where the terms $II_1$ et $II_2$ are 
evaluated as follow:
\begin{align*}
II_1&= \frac{1}{2} \sgn(g) (-1)^{d+1}\left[\eta^{aa}e_a(a_c c^c_{ab})e^b-\eta^{cb} e_c(a_{\sss D} c^d_{ab})e^a \right]\\
&= \frac{1}{2} \sgn(g) (-1)^{d+1}\left[\eta^{ad}e_a(a_c c^c_{db})e^b-\eta^{cb} e_c(a_{\sss D} c^d_{ab})e^a \right]\\
&= \frac{1}{2} \sgn(g) (-1)^{d+1}\left[\eta^{ad}e_a(a_c c^c_{db})e^b + \eta^{ad} e_a(a_c c^c_{db})e^b \right]\\
&= \sgn(g) (-1)^{d+1}\left[\eta^{ad}e_a(a_c c^c_{db})e^b\right]\\
  &= \sgn(g) (-1)^{d+1}\left[\eta^{ad}e_a(a_c) c^c_{db}e^b+\eta^{ad}a_ce_a(c^c_{db}) e^b\right],
   \end{align*}
and
\begin{align*}
II_2
&=   \frac{1}{2} \sgn(g) (-1)^{d+1} a_c c^c_{ab} \left[\delta(e^b)e^a+ \eta^{ac}\eta^{bm} \eta_{lf} c^l_{cm} e^f-\delta(e^a) e^b\right]\\
&=\sgn(g) (-1)^{d+1}\left[
 \eta^{bm}a_cc^c_{ba}c^p_{pm}
 + \frac{1}{2}a_c\eta^{mn}\eta^{pq}\eta_{ab} c^c_{mp} c^b_{nq}
 \right] e^a.
\end{align*}

The Maxwell part of $\square$ being determined one focuses on
the second part, namely: $-d\delta(a)$.  Thanks to  Eq.~(\ref{EQ-DeltaOneForm}) one obtains
\begin{align*}
III &=-d\delta(a)\\
&=-d[-e_a(a^a) +a_a\delta(e^a)]\\
 &=e_b[e_a(a^a) + a^ac^p_{pa}]e^b\\
&=e_b\left(e_a(a^a)\right)e^b+ e_b(a^a)c^p_{pa}e^b +a^a e_b(c^p_{pa})e^b.
\end{align*}
Finally
\begin{equation*}
\square a=(-1)^d\sgn(g)( I_1+I_2+II_1+II_2)+ III,
\end{equation*}
which is Eq.~(\ref{EQ-FormGeneralCoeffStruct}).

\subsection{Expression of the term  ${}^n\!B$ of Eq.~(\ref{EQ-coeff-sB})}\label{APP-Calcul-sB}

From Eq.~(\ref{EQ-FormGeneralCoeffStruct}) the terms containing the $n$ index denoted ${}^n\!B$ reads
\begin{equation*}
{}^n\!B = \sum_{m=1}^{9}{}^n\!B^m,
\end{equation*}
where ${}^n\!B^m$ is the $m$-th  term of the r.h.s. of  Eq.~(\ref{EQ-FormGeneralCoeffStruct}):
\begin{equation*}
{}^n\!B^1 =    \eta^{nn} e_n(e_n(a_\mu))e^\mu \ ;\ 
{}^n\!B^2 =(c^n_{\mu \nu}e_n(a^\nu) +c^\nu_{\mu n}e_\nu(a^n) +c^n_{\mu n}e_n(a^n))e^\mu\ \ \mbox{\it etc.}\end{equation*}
Thanks to the relation $a_n= i_{e_n}a = i_{\sss D} a/\sqrt{|y^2|}$, and the 
fact that $e_\mu(y^2)=0$,
one obtains
\begin{align*}
&{}^n\!B^1 =    -\epsilon \frac{1}{|y^2|}(D^2(a_\mu)-D(a_\mu))e^\mu\ ;\ 
{}^n\!B^2 = -\epsilon \frac{1}{|y^2|}   e_\mu(i_{\sss D}a)e^\mu  \ ;\ 
{}^n\!B^3  = -\epsilon \frac{1}{|y^2|} D(a_\mu)e^\mu\ ; \\
&{}^n\!B^4 =       \epsilon \frac{1}{|y^2|} (D(a_\mu)e^\mu-  e_\mu(i_{\sss D}a)e^\mu  ) \ ;\ 
{}^n\!B^5  = - \epsilon \frac{n}{|y^2|} D(a_\mu)e^\mu \ ;\ 
{}^n\!B^6  = \epsilon \frac{1}{|y^2|} a_\mu e^\mu \ ;   \\
&{}^n\!B^7 = - \epsilon \frac{n}{|y^2|} a_\mu e^\mu \ ;\  
{}^n\!B^8 =0   \ ;\ 
{}^n\!B^9 =  \epsilon \frac{1}{|y^2|} a_\mu e^\mu.
\end{align*}
Gathering these terms gives Eq.~($\ref{EQ-coeff-sB}$).

\subsection{Intrinsic form for Eq.~(\ref{EQ-RestricFormRn+2ExtendedFrameVersion})}\label{APP-IntrinsicFormOfEqVect}
We now rewrite the second term of the r.h.s. of Eq.~(\ref{EQ-RestricFormRn+2ExtendedFrameVersion}).
We now show that
\begin{equation}\label{EQ-m*partiala_mu}
    m^*\left((\partial^2_{n+1} a_\mu)e^\mu\right) = m^*\left(H^{-2}\mathcal{L}_{\sss F}^2 (a)\right). 
\end{equation}
To begin with, we have $\mathcal{L}_{\partial_{n+1}}(e^\mu)=0$, since 
\begin{align*}
    \mathcal{L}_{\partial_{n+1}}(e^\mu)&=
    di_{n+1} e^\mu+i_{n+1}de^\mu\\
    &=0-\frac{1}{2}i_{n+1}c^\mu_{\alpha\beta}e^\alpha\wedge\beta\\
    &=0,
\end{align*}
in which we used $c^\mu_{\alpha n+1}=0$. As a consequence, $\mathcal{L}^2_{n+1}(a_\mu)e^\mu=\mathcal{L}^2_{n+1}(a_\mu e^\mu)$ and the result follows using Eq.~(\ref{EQ-FOnCoordn+1}).

Thanks to Eqs.~(\ref{EQ-D=Dp+Dn+1})--(\ref{EQ-FOnCoordn+1}) and 
$y^{n+1} = H^{-1} f$ one has $D_{\sss P} = D - \epsilon H^{-2}f F $, which allows us to eliminate $D_{\sss P}$ from Eq.~(\ref{EQ-RestricFormRn+2ExtendedFrameVersion}). This is achieved as follows.

One first rewrites $\mathcal{L}_{\sss D_{n+1}}$ thanks to the Cartan formula 
under the form: 
\begin{equation*}
{\cal L}_{\phi {\sss V}}=\phi{\cal L}_{\sss V}+d\phi\wedge i_{\sss V},  
\end{equation*}
$\phi$ being a scalar and $V$ a vector. Then 
\begin{equation}\label{EQ-Lie_Dn+1}
    \mathcal{L}_{\sss D_{n+1}} = y^{n+1}\mathcal{L}_{\partial_{n+1}} 
    + dy^{n+1} \wedge i_{\partial_{n+1}}.
\end{equation}
using Eq.~(\ref{EQ-FOnCoordn+1}) and $y^{n+1} = H^{-1} f$ this expression rewrites
$\mathcal{L}_{\sss D_{n+1}} = \epsilon H^{-2}(f{\cal L}_{F}+df\wedge i_F)$,
noting that  $m^*f=1$, $m^*df=0$, one obtains
\begin{equation}\label{EQ-m*L_Dn+1}
m^* \mathcal{L}_{\sss D_{n+1}} =
m^*(\epsilon H^{-2}\mathcal{L}_{\sss F}),
\end{equation}
where, here and after, we note, for any operator $A$, $m^*A$ instead of $m^*\circ A$.
This allows us to recast $m^*\mathcal{L}_{\sss D_{\sss P}}$ under the form
\begin{equation}\label{EQ-m*L_DP}
m^*\mathcal{L}_{\sss D_{\sss P}} = m^*\left(\mathcal{L}_{\sss D} -  \epsilon H^{-2}\mathcal{L}_{\sss F}\right).
\end{equation}

We now consider $m^*\mathcal{L}^2_{\sss D_{\sss P}} = m^*\left(\mathcal{L}_{\sss D}^2 + \mathcal{L}_{\sss D_{n+1}}^2 - \left[\mathcal{L}_{\sss D_{}}, 
\mathcal{L}_{\sss D_{n+1}} \right]_+\right)$, and first compute 
$m^*\mathcal{L}_{\sss D_{n+1}}^2$.
From Eq.~(\ref{EQ-Lie_Dn+1}), taking into account $m^* dy^{n+1} = 0$ and $m^*y^{n+1}=H^{-1}$, one has 
\begin{align*}
 m^*\mathcal{L}_{\sss D_{n+1}}^2 &=  m^*\left(y^{n+1}\mathcal{L}_{\partial_{n+1}} 
+ dy^{n+1} \wedge i_{\partial_{n+1}}\right)  \left(y^{n+1}\mathcal{L}_{\partial_{n+1}} 
+ dy^{n+1} \wedge i_{\partial_{n+1}}\right) \\
 &= m^* y^{n+1}\mathcal{L}_{\partial_{n+1}} y^{n+1}\mathcal{L}_{\partial_{n+1}} 
   + m^* y^{n+1}\mathcal{L}_{\partial_{n+1}} dy^{n+1} \wedge i_{\partial_{n+1}}\\
 &=   m^* y^{n+1}\left(\mathcal{L}_{\partial_{n+1}}(y^{n+1}) \mathcal{L}_{\partial_{n+1}} + y^{n+1} \mathcal{L}^2_{\partial_{n+1}}\right) 
+ m^* y^{n+1}\mathcal{L}_{\partial_{n+1}}(dy^{n+1}) \wedge i_{\partial_{n+1}} \\
 &=m^* \left(y^{n+1}\mathcal{L}_{\partial_{n+1}} + (y^{n+1})^2 \mathcal{L}^2_{\partial_{n+1}}\right). 
\end{align*}
Then we determine the term $m^* \left[\mathcal{L}_{\sss D}, \mathcal{L}_{\sss D_{n+1}} \right]_+$. Noting that $[D, D_{n+1}] = 0$ and thus
$\left[\mathcal{L}_{\sss D}, \mathcal{L}_{\sss D_{n+1}} \right] =0$ one has  
\begin{align*}
  m^* \left[\mathcal{L}_{\sss D}, \mathcal{L}_{\sss D_{n+1}} \right]_+ &= m^* \left(2 \mathcal{L}_{\sss D_{n+1}} \mathcal{L}_{\sss D} \right) \\
  &= m^* \left(2 \left(y^{n+1} \mathcal{L}_{\partial_{n+1}} 
  + dy^{n+1}\wedge i_{\partial_{n+1}}\right) \mathcal{L}_{\sss D} \right)\\
  &= m^* 2 \left(y^{n+1} \mathcal{L}_{\partial_{n+1}} \mathcal{L}_{\sss D} \right),
\end{align*}
where $m^*dy^{n+1} = 0$ has been used. 

Gathering the above results and using Eq.~(\ref{EQ-FOnCoordn+1}) and $y^{n+1} = H^{-1} f$ gives
\begin{equation}\label{EQ-m*L^2_DP}
m^*\mathcal{L}^2_{\sss D_{\sss P}} =    m^* \left(\mathcal{L}^2_{\sss D} + H^{-4} \mathcal{L}^2_{\sss F} 
+ \epsilon H^{-4} \mathcal{L}^2_{\sss F}(1 - 2 \mathcal{L}_{\sss D})\right).
\end{equation}

The last term of Eq.~(\ref{EQ-RestricFormRn+2ExtendedFrameVersion}) containing $D_{\sss P}$, rewrites as  
\begin{equation}\label{EQ-m*di_DP}
m^*di_{\sss D_{\sss P}} = m^*(di_{\sss D} - \epsilon H^{-2} di_{\sss F}).
\end{equation}

Finally, using the above relations Eqs.~(\ref{EQ-m*partiala_mu}), (\ref{EQ-m*L_DP}), (\ref{EQ-m*L^2_DP}) and (\ref{EQ-m*di_DP}),  
the second term of the r.h.s. of Eq.~(\ref{EQ-RestricFormRn+2ExtendedFrameVersion}):
\begin{equation*}
    X:= m^*\left[\epsilon(\partial^2_{n+1} a_\mu)e^\mu - \epsilon H^2 
\left({\cal L}^2_{\sss D_{\sss P}}(a)+(n-3){\cal L}_{\sss D_{\sss P}}(a)+ 2\,d(i_{\sss D_{\sss P}}a)\right)\right],
\end{equation*}
writes
\begin{equation*}
X = 
 m^*\left[ -\epsilon H^2\left( 
\mathcal{L}_{\sss D}^2 + (n-3)\mathcal{L}_{\sss D} + 2 d i_{\sss D}\right) 
+ 2\mathcal{L}_{\sss F}\mathcal{L}_{\sss D}
+ (n - 4) \mathcal{L}_{\sss F} + 2 d i_{\sss F}\right](a).
\end{equation*}

\subsection{Derivation of Eq.~(\ref{EQ-RestricScalarRn+2IntrinsicVersion})}\label{APP-IntrinsicFormOfEqScal}
The Eq.~(\ref{EQ-ScalarGeneralCoeffStruct}), as for the one-form, splits into three terms:
\begin{align*}
{}^r\!C&=\eta^{\mu\nu}[e_\mu e_\nu(\phi)+e_\mu(\phi) c^\alpha_{\alpha\mu}],\\
{}^n\!C&=\eta^{nn}[e_n e_n(\phi)+e_n(\phi) c^\alpha_{\alpha n}],\\
{}^{n+1}\!C&=\eta^{n+1 n+1}[\partial_{n+1}\partial_{n+1}(\phi)+\partial_{n+1}(\phi) c^\alpha_{\alpha n+1}].
\end{align*} 

As already said $m^*{}^r\!C=\square_{\sss \Sigma} \phi_{\sss \Sigma}$. 
Now, using the notations of Sec.~\ref{SUBSEC-AdaptedFramesRn+2},  and $y^2_{\sss P} := y^{\sss A}y_{\sss A}$, the two remaining terms read
\begin{align*}
{}^n\!C&=\eta^{nn}[e_n e_n(\phi)+e_n(\phi) c^\alpha_{\alpha n}]\\
&=-\epsilon\left[\frac{1}{\sqrt{|y^2_{\sss P} |}} (D_{\sss P}) \frac{1}{\sqrt{|y^2_{\sss P}|}}(D_{\sss P})\phi+\frac{n}{|y^2_{\sss P}|}(D_{\sss P})\phi\right]\\
&=-\frac{\epsilon}{|y^2_{\sss P} |}[ (D_{\sss P})^2 \phi- (D_{\sss P})\phi +n(D_{\sss P})\phi]\\
&=-\frac{\epsilon}{|y^2_{\sss P} |}[ D^2 \phi-2D_{n+1}D\phi+(y^{n+1})^2\partial_{n+1}^2\phi+D_{n+1}\phi +(n-1)(D - D_{n+1})\phi],
\end{align*}
and ${}^{n+1}\!C = \epsilon\partial_{n+1}^2\phi$.

Using $y^{n+1} = H^{-1}f$ one remarks that the term ${}^{n+1}\!C$ simplify with a term in ${}^n\!C$:
\begin{equation*}
m^*\left(-\frac{\epsilon}{|y^2_{\sss P}|} (y^{n+1})^2\partial_{n+1}^2\phi + \epsilon\partial_{n+1}^2\phi\right) = 
-\frac{\epsilon}{H^{-2}} H^{-2}\partial_{n+1}^2\phi + \epsilon\partial_{n+1}^2\phi = 0,
\end{equation*}
and one thus has successively
\begin{align*}
m^*(\square_{n+2}\phi)&=
\square_{\sss \Sigma}\phi_{\sss \Sigma} -\epsilon H^2\left[D^2\phi-2D_{n+1}D\phi+D_{n+1}\phi+(n-1)
(D\phi-D_{n+1})\phi\right]_{\sss \Sigma}\\
&=\square_{\sss \Sigma}\phi_{\sss \Sigma} -\epsilon H^2\left[D^2\phi-2D_{n+1}D\phi+D_{n+1}\phi+(n-1)(D_{\sss P})\phi\right]_{\sss \Sigma}\\
&=\square_{\sss \Sigma}\phi_{\sss \Sigma} -\epsilon H^2\left[D^2\phi +(n-1)D\phi-2H^{-1}\partial_{n+1}D\phi+(2-n)H^{-1}\partial_{n+1}\phi\right]_{\sss \Sigma}\\
&=\square_{\sss \Sigma}\phi_{\sss \Sigma} -\epsilon H^2\left[D^2\phi +(n-1)D\phi-2\epsilon H^{-2}FD\phi+\epsilon(2-n)H^{-2}F\phi\right]_{\sss \Sigma}\\
&=\square_{\sss \Sigma}\phi_{\sss \Sigma} -\epsilon H^2\left[D^2\phi +(n-1)D\phi\right]_{\sss \Sigma} +\left[2FD\phi+(n-2)F\phi\right]_{\sss \Sigma},
\end{align*}
which is Eq.~(\ref{EQ-RestricScalarRn+2IntrinsicVersion}).

%
\end{document}